\RequirePackage{tikz}

\documentclass[default,iicol]{sn-jnl}
\usepackage{todonotes}

\usepackage{balance}
\usepackage{amsthm, amsmath, amsfonts, mathtools}
\usepackage{thmtools}
\usepackage{thm-restate}
\usepackage{xspace}
\usepackage[capitalize]{cleveref}
\crefname{enumi}{Statement}{Statements}
\usepackage{tikz}

\newcommand{\eps}{\varepsilon}
\newcommand{\Oh}{\operatorname{O}}
\newcommand{\E}{\mathbb{E}}
\def\protocol#1{\textsc{#1}\xspace}
\def\norm#1{\lVert#1\rVert}
\def\abs#1{\lvert#1\rvert}

\newcommand{\HKS}[1][{}]{(G=(V,E),\eps, x_{#1})}
\newcommand{\posV}[3][{}]{x_{#3}(#2)}
\newcommand{\moveV}[2][{}]{m_{#1}(#2)}
\newcommand{\Nv}[2][{}]{\mathcal{N}_{#1}(#2)}
\newcommand{\CG}[1]{I_{{#1}}}
\newcommand{\Et}[1]{\mathcal{E}_{#1}}
\newcommand{\stateHKS}[1][{}]{(G=(V,E),\eps, x_{#1})}
\newcommand{\posVProj}[3][{}]{\bar{x}_{#3}(#2)}
\newcommand{\moveVProj}[2][{}]{\bar{m}_{#1}(#2)}
\newcommand{\infNetw}[1]{I_{#1} = (V,\Et{#1})}
\newcommand{\NvProj}[2][{}]{\bar{\mathcal{N}}_{#1}(#2)}
\newcommand{\stateHKSProj}[1][{}]{(\graphProj{#1},\eps, \bar{x}_{#1})}
\newcommand{\graphProj}[1]{\bar{G}=(V,\bar{E}_{#1})}

\jyear{2022}

\theoremstyle{thmstyleone}
\newtheorem{theorem}{Theorem}

\newtheorem{lemma}[theorem]{Lemma}
\newtheorem{corollary}[theorem]{Corollary}

\theoremstyle{thmstyletwo}

\theoremstyle{thmstylethree}

\raggedbottom

\begin{document}

\title[Asynchronous Social HKS]{Asynchronous Opinion Dynamics in Social Networks}

\author[1]{\fnm{Petra} \sur{Berenbrink}}\email{petra.berenbrink@uni-hamburg.de}

\author[2]{\fnm{Martin} \sur{Hoefer}}\email{mhoefer@em.uni-frankfurt.de}

\author[1]{\fnm{Dominik} \sur{Kaaser}}\email{dominik.kaaser@uni-hamburg.de}

\author[3]{\fnm{Pascal} \sur{Lenzner}}\email{pascal.lenzner@hpi.de}

\author[1]{\fnm{Malin} \sur{Rau}}\email{malin.rau@uni-hamburg.de}

\author[4]{\fnm{Daniel} \sur{Schmand}}\email{schmand@uni-bremen.de}

\affil[1]{\orgdiv{Department of Informatics}, \orgname{Universität Hamburg}, \orgaddress{\street{Vogt-K\"olln-Straße 30}, \postcode{22527} \city{Hamburg}, \country{Germany}}}

\affil[2]{\orgdiv{Department of Computer Science}, \orgname{Goethe University Frankfurt}, \orgaddress{\street{Robert-Mayer-Straße 11-15}, \postcode{60325} \city{Frankfurt am Main}, \country{Germany}}}

\affil[3]{\orgname{Hasso Plattner Institute}, \orgaddress{\street{Prof.-Dr.-Helmert-Str. 2-3}, \postcode{14482} \city{Potsdam}, \country{Germany}}}

\affil[4]{\orgdiv{Center for Industrial Mathematics}, \orgname{University of Bremen}, \orgaddress{\street{Bibliothekstraße 5}, \postcode{28359} \city{Bremen}, \country{Germany}}}

\abstract{Opinion spreading in a society decides the fate of elections, the success of products, and the impact of political or social movements.
A prominent model to study opinion formation processes is due to Hegselmann and Krause. It has the distinguishing feature that stable states do not necessarily show consensus, i.e., the population of agents might not agree on the same opinion.
 
We focus on the social variant of the Hegselmann-Krause model. There are $n$ agents, which are connected by a social network. Their opinions evolve in an iterative, asynchronous process, in which agents are activated one after another at random. When activated, an agent adopts the average of the opinions of its neighbors having a similar opinion (where similarity of opinions is defined using a parameter $\eps$). Thus, the set of influencing neighbors of an agent may change over time.

We show that such opinion dynamics are guaranteed to converge for any social network. We provide an upper bound of $\Oh(n\lvert E\rvert ^2 (\eps/\delta)^2)$ on the expected number of opinion updates until convergence to a stable state, where $\lvert E\rvert$ is the number of edges of the social network, and $\delta$ is a parameter of the stability concept. For the complete social network we show a bound of $\Oh(n^3(n^2 +  (\eps/\delta)^2))$ that represents a major improvement over the previously best upper bound of $\Oh(n^9 (\eps/\delta)^2)$. }

\keywords{Hegselmann-Krause Systems, Opinion Formation, Asynchronous Dynamics, Social Networks, Convergence Time}

\maketitle

\section{Introduction}

Our opinions are not static. On the contrary, opinions are susceptible to dynamic changes, and this is heavily exploited by (social) media, influencers, politicians, and professionals for public relations campaigns and advertising. The way we form our opinions is not a solitary act that simply combines our personal experiences with information from the media. Instead, it is largely driven by interactions with our peers in our social network. We care about the opinions of our peers and relatives, and their opinions significantly influence our own opinion in an asynchronous dynamic process over time. Such opinion dynamics are pervasive in many real-world settings, ranging from small scale townhall meetings, community referendum campaigns, parliamentary committees, and boards of enterprises to large scale settings like political campaigns in democratic societies or peer interactions via online social networks.  

The aim for understanding how opinions are formed and how they evolve in multi-agent systems is the driving force behind an interdisciplinary research effort in diverse areas such as sociology, economics, political science, mathematics, physics, and computer science. Initial work on these issues dates back to Downs~\cite{downs1957economic} and early agent-based opinion formation models as proposed by Abelson and Bernstein~\cite{abelson1963computer}.

In this paper we study an agent-based model for opinion formation on a social network where the opinion of an agent depends both on its own intrinsic opinion and on the opinions of its network neighbors. One of the earliest influential models in this direction was defined by \mbox{DeGroot}~\cite{DeGroot74}. In this model the opinion of an agent is iteratively updated to the weighted average of the opinions of its neighbors. Later, Friedkin and Johnsen~\cite{FJ90} extended this by incorporating private opinions.
Every agent has a private opinion which does not change and an expressed opinion that changes over time. The expressed opinion of an agent is determined as a function of the expressed opinions of its neighbors and its private opinion. 

The main focus of our paper is the very influential model by Hegselmann and Krause~\cite{HK02} that adds an important feature: the set of neighbors that influence a given agent is no longer fixed, and the agents' opinions and their respective sets of influencing neighbors co-evolve over time. At any point in time the set of influencing neighbors of an agent are all the neighbors in a given static social network with an opinion close to their own opinion. Hence, agents only adapt their opinions to neighboring agents having an opinion that is not too far away from their own opinion. Note that this adaption, in turn, might lead to a new set of influencing neighbors. In sociology this wide-spread behavior is known as homophily~\cite{McPherson01}, which, for example, governs the formation of social networks and explains residential segregation. Co-evolutionary opinion formation helps to analyze and explain current phenomena like filter bubbles in the Internet~\cite{Pariser11} and social media echo chambers~\cite{Cinelli21} that inhibit opinion exchange and amplify extreme views. The co-evolution of opinions and the sets of influencing neighbors is the key feature of a Hegselmann-Krause system (HKS). It is also the main reason why the analysis of the dynamic behavior of a HKS is highly non-trivial and challenging.

Typical questions studied are the convergence properties of the opinion dynamics: Is convergence to stable states guaranteed, and if yes, what are upper and lower bounds on the convergence time? Guaranteed convergence is essential since otherwise the predictive power of the model is severely limited. Moreover, studying the convergence \emph{time} of opinion dynamics is crucially important. In general, the analysis of stable states is significantly more meaningful if these states are likely to be reached in a reasonable amount of time, i.e., if quick convergence towards such states is guaranteed. If systems do not stabilize in a reasonable time, stable states lack justification as a prediction of the system's behavior. 

Researchers have investigated the convergence to stable states and the corresponding convergence speed in many variants of the Hegselmann-Krause model. The existing work can be categorized along two dimensions: complete or arbitrary social network and \emph{synchronous} or \emph{asynchronous} updates of the opinions.
Synchronous opinion updates means that \emph{all} agents update their opinion at the same time. In systems with asynchronous updates a single agent is selected uniformly at random and only this agent updates its opinion. While the main body of recent work focuses on HKSs assuming the complete graph as social network and the synchronous update rule, empirical simulations have also been performed with asynchronous updates on arbitrary social networks. Interestingly, convergence guarantees and convergence times for the latter case are, to the best of our knowledge, absent from the literature so far. This case is arguably the most realistic setting as social networks are typically sparse, i.e.,  non-complete, and social interactions and thereby opinion exchange usually happens in an uncoordinated asynchronous fashion.

In this paper we study the following \emph{Hegselmann-Krause system (HKS)}. We have $n$ agents and their opinions are modeled by points in $d$-dimensional Euclidean space $\mathbb{R}^d$, for some $d\geq 1$. The agents are connected by a \emph{social network} which does not change over time. At any point of time the set of \emph{influencing neighbors} of an agent is the subset of its neighbors (in the social network) with an opinion of distance at most $\eps > 0$  from its own opinion. 
We assume that in each step a random agent is activated and its opinion is updated to the average of its current opinion and the opinion of all current influencing neighbors. Note in such an asynchronous HKS stable states in the sense that no agent will change its opinion might never be reached. This can be seen by a simple example with two nodes and one edge. Hence, we adopt a natural stability criterion defined by Bhattacharyya and Shiragur~\cite{BhattacharyyaS15}.
A HKS is in a \emph{$\delta$-stable state} if and only if each edge in the influence network has length at most $\delta$. For this scenario we prove that the convergence of the opinion dynamics is guaranteed. We give an upper bound on the expected convergence time of  
\[
\Oh(n\abs{E}^2 (\eps/\delta)^2) \leq \Oh(n^5 (\eps/\delta)^2),
\]
where $\abs{E}$ is the cardinality of the edge set of the given social network. We demonstrate the tightness of our derived upper bound by providing analytical lower bounds as well as empirical simulations for several topologies of the underlying social network topologies.
Note that for complete graphs as social network our bound of $\Oh(n^3(n^2 + (\eps/\delta)^2))$ improves the best previously known upper bound of $\Oh(n^9 (\eps/\delta)^2)$~\cite{EtesamiB15}. 

\subsection{Related Work} 
We focus our discussion on recent research on Hegselmann-Krause systems and other opinion formation models.

\paragraph{Synchronous HKSs on Complete Networks}
Most recent research focused on  synchronous opinion updates in complete social networks. 
For this setting it is known that the process always converges to a state where no agent changes its opinion anymore~\cite{Chazelle11}. We denote such states as \emph{perfectly stable states}. 
Touri and Nedic~\cite{TouriN11} prove that any one-dimensional HKS  converges in $\Oh(n^4)$ synchronous update rounds to a perfectly stable state.
Bhattacharyya et al.~\cite{BhattacharyyaBCN13} improve this upper bound to $\Oh(n^3)$. For $d$ dimensions they show a  convergence time of $\Oh(n^{10} d^2)$.
For arbitrary $d$ Etesami and Başar~\cite{EtesamiB15} establish a bound of $\Oh(n^6)$ rounds, which is independent of the dimension $d$.
Finally, Martinsson~\cite{Martinsson15} shows that any synchronous $d$-dimensional HKS  converges within $\Oh(n^4)$ update rounds to a perfectly stable state.

Regarding lower bounds, Bhattacharyya et al.~\cite{BhattacharyyaBCN13} construct two-dimensional instances that need at least $\Omega(n^2)$ update rounds before a perfectly stable state is reached.
Later, Wedin and Hegarty~\cite{WedinH15} show that this lower bound holds even in one-dimensional systems.

\paragraph{Synchronous HKSs on Arbitrary Social Networks} 
In \cite{parasnis2019convergence}, the authors use the probabilistic method to prove that 
the expected convergence time to a perfectly stable state is infinite for general networks.
This also holds for a slightly weaker stability concept than perfect stability: in all future steps an agent's opinion will not move further than by a given distance $\delta$. To show their result the authors construct a HKS with infinitely many oscillating states. 
Their stability notion is also different to the one considered in this paper. We analyze the time to reach a $\delta$-stable state which is defined as a state where any edge in the influence network has length at most $\delta$ (see \cref{sec:model}).
For $\delta$-stability  Bhattacharyya and Shiragur~\cite{BhattacharyyaS15} prove that a synchronous HKS with an arbitrary social network reaches a $\delta$-stable state in $\Oh(n^5 (\eps/\delta)^2)$ synchronous rounds. 

\paragraph{Asynchronous HKSs} 
Compared to the synchronous case, the existing results for asynchronous HKSs are rather limited. On the empirical side, Fortunato~\cite{fortunato2005consensus} investigated the consensus threshold with uniformly chosen initial opinions in asynchronous dynamics on non-complete social networks like grids, Erd\H{o}s-Rényi graphs, or scale-free random graphs.
To the best of our knowledge, convergence guarantees and convergence times on non-complete networks were first studied by Etesami and Başar~\cite{EtesamiB15} where the authors consider \emph{$\delta$-equilibra} in contrast to $\delta$-stable states. They define a $\delta$-equilibrium as a state where each connected component of the influence network has an Euclidean diameter of at most $\delta$ and prove that the expected number of update steps to reach such a state is bounded by $\Oh(n^9 (\eps/\delta)^2)$ for the complete social network. In general, $\delta$-equilibria are a proper subset of the set of $\delta$-stable states. However, in \cref{sec:MainResults} we discuss the equivalence of both stability notions on complete social networks.

\paragraph{Other Opinion Formation Models}
In the seminal models by Friedkin and Johnsen~\cite{FJ90} (extending earlier work by DeGroot~\cite{DeGroot74}) each agent has an innate opinion and strategically selects an expressed opinion that is a compromise of its innate opinion and the opinions of its neighbors. Recently, co-evolutionary and game-theoretic variants were studied~\cite{DBLP:journals/geb/BindelKO15,BGM13,BFM18,EpitropouFHS19,FotakisPS16}, and the results focus on equilibrium existence and social quality, measured by the price of anarchy.
In the AI and multi-agent systems community, opinion formation is studied intensively. In~\cite{DBLP:conf/aaai/AulettaFF19} a co-evolutionary model is investigated, where also the innate opinion may change over time.
There is also substantial work on understanding opinion diffusion, i.e., the process of how opinions spread in a social network~\cite{DBLP:conf/ijcai/BredereckE17,DBLP:conf/ijcai/BredereckJK20,DBLP:conf/atal/BotanGP19,DBLP:conf/ijcai/Anagnostopoulos20,DBLP:conf/ijcai/FaliszewskiGKT18,DBLP:conf/atal/DeBG18}.
Moreover, in~\cite{DBLP:conf/atal/CoatesHK18,DBLP:conf/atal/CoatesHK18a} a framework and a simulator for agent-based opinion formation models is presented.
Opinion dynamics and in particular the emergence of echo chambers is modeled with tools from statistical physics in~\cite{fu2008coevolutionary,evans2018opinion}

Another line of related research on opinion dynamics has its roots in randomized rumor spreading and distributed consensus processes (see~\cite{DBLP:journals/sigact/BecchettiCN20} for a rather recent survey).
Communication in these models is typically restricted to constantly many neighbors.
A simple and natural protocol in this context is the \protocol{Voter} process \cite{DBLP:journals/iandc/HassinP01,DBLP:journals/networks/NakataIY00,DBLP:conf/podc/CooperEOR12,DBLP:conf/icalp/BerenbrinkGKM16}, where every agent adopts in each round the opinion of a single, randomly chosen neighbor.
Similar processes are the \protocol{TwoChoices} process~\cite{DBLP:conf/icalp/CooperER14,DBLP:conf/wdag/CooperERRS15,DBLP:conf/wdag/CooperRRS17}, the \protocol{3Majority} dynamics~\cite{DBLP:journals/dc/BecchettiCNPST17, DBLP:conf/podc/GhaffariL18, DBLP:conf/podc/BerenbrinkCEKMN17}, and the Undecided State Dynamics~\cite{DBLP:journals/dc/AngluinAE08,DBLP:conf/soda/BecchettiCNPS15,DBLP:conf/mfcs/ClementiGGNPS18,DBLP:conf/podc/GhaffariP16a, DBLP:conf/icalp/BerenbrinkFGK16,BankhamerSODA22}.

\subsection{Model and Notation} \label{sec:model}
A \emph{Hegselmann-Krause system (HKS)} $\HKS$ in $d$ dimensions is defined as follows. We are given a \emph{social network} $G = (V,E)$ and a \emph{confidence bound} $\eps \in \mathbb{R}_+$.
The $n$ nodes of the social network correspond to the agents, and each agent $v \in V$ has an \emph{initial opinion $\posV{v}{ } \in \mathbb{R}^d$}. We will use the terms agents and nodes interchangeably.
As the opinion of agent $v$ is represented by a point in the $d$-dimensional Euclidean space, we sometimes call it \emph{the position of $v$}. 
In step $t \in \mathbb{Z}_{\geq 0}$ the opinion of agent $v \in V$ is denoted as $\posV{v}{t} \in \mathbb{R}^d$, where $\posV{v}{0} = \posV{v}{ }$. 
For some constant confidence bound $\eps\in \mathbb{R}_+$ we define the \emph{influencing neighborhood} of agent $v \in V$ at time $t$ as 
\begin{align*}
    &\Nv[t]{v} = \\
    &\{v\} \cup \{u \in V \mid \{u,v\} \in E, \|\posV{u}{t}-\posV{v}{t}\|_2 \leq \eps \}\;.
\end{align*}
In each step $t$ one agent $v \in V$ is chosen uniformly at random and updates its position according to the rule
\[
\posV{v}{t+1} = \frac{\sum_{u \in\Nv[t]{v} }\posV{u}{t}}{\lvert\Nv[t]{v}\rvert}\;.
\] If $\posV{v}{t} \neq \posV{v}{t+1}$, then we say that (the opinion of) agent $v$ has \emph{moved}.
Also, in an update of agent $v$'s position in step $t$, all other agents do not change their positions, i.e., $\posV{u}{t+1} = \posV{u}{t}$ for $u \neq v$.

Given a social network $G = (V,E)$, we define for any edge $e = \{u,v\} \in E$ at time $t$ the 
length of $e$ as $\|\posV{e}{t}\|_2 = \| \posV{u}{t}-\posV{w}{t}\|_2$.
We define the \emph{available movement} $\moveV[t]{v}$ of agent $v \in V$ at time $t$ as the $d$-dimensional vector
\[
\moveV[t]{v} = \sum_{u \in \Nv[t]{v}} \frac{\posV{u}{t}-\posV{v}{t}}{\lvert\Nv[t]{v}\rvert}\;.
\]
Note that $\moveV[t]{v} = \posV{v}{t+1} - \posV{v}{t}$ if $v$ is chosen in step $t$, and hence $\|\moveV[t]{v}\|_2$ denotes the distance the agent moves when activated in step $t$.
The \emph{influence network} $\CG{t}$ in step $t$ is given by the social network $G$ restricted to edges that have length at most $\eps$.
More formally, it is defined as $\CG{t} = (V,\Et{t})$, where $e = \{u,v\} \in \Et{t}$ if and only if $u \in \Nv[t]{v}$, i.e., $\|\posV{e}{t}\|_2 \leq \eps$. 
We define the \emph{state} of a HKS $\HKS$ at time $t$ as $S_t = \stateHKS[t]$ and it refers to the positions of the agents at that specific time.
If clear from the context, we omit the parameter $t$.
For a fixed state $S$, the term $\Nv{v}$ denotes the influencing neighborhood in this state.

We are interested in the expected number of steps that are required until the HKS reaches a $\delta$-stable state, which is 
a natural stability criterion defined by Bhattacharyya and Shiragur~\cite{BhattacharyyaS15}. A HKS is in a \emph{$\delta$-stable state} if and only if each edge in the influence network has length at most $\delta$. Intuitively, in such a state each agent has a small incentive to further revise the opinion. Hence, it is reasonable to assume that such states represent a stable configuration of the system. Strictly speaking, however, in a $\delta$-stable state the HKS might not be stabilized \emph{entirely} in the sense that agents are unable to achieve further improvement by a deviation \emph{at all}. If agents would continue to revise their opinions, the HKS might subsequently be able to leave the $\delta$-stable state. Put differently, not all such states are attractive. We note, however, that this is a condition shared by the vast majority of approximate stability or equilibrium concepts defined in the literature.

We call the number of steps to reach a $\delta$-stable state the \emph{convergence time} of the system. To track the progress towards convergence, we define the following potential function
for any state $S=\stateHKS$ of a $d$-dimensional HKS $\HKS$:
\[
\Phi(S) = \sum_{\{u,v\}\in E} \min\{ \|\posV{u}{} - \posV{v}{}\|_2^2, \eps^2\} .
\]
This potential is upper-bounded by $\Phi(S) \leq \abs{E} \eps^2$.

\subsection{Our Contribution}
We study the convergence time to a $\delta$-stable state in Hegselmann-Krause systems with an arbitrary initial state and an arbitrary given social network, where we update one uniformly at random chosen agent in each step. To the best of our knowledge, this is the first analysis of the variant of HKSs that feature asynchronous opinion updates on a given arbitrary social network. For these systems, we prove the following:

\begin{restatable}{theorem}{UpperBoundArbitraryGraph}
\label{thm:upper-bound}
For a $d$-dimensional HKS $S_0 = \HKS$, the expected convergence time to a $\delta$-stable state under uniform random asynchronous updates is $\Oh(\Phi(S_0)n\abs{E}/\delta^2)\leq\Oh(n \abs{E}^2 \left(\eps/\delta \right)^2)$.
\end{restatable}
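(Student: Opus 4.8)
The natural approach is a potential function argument. Define $\Phi(S_t) = \sum_{e \in \Et{t}} \norm{\posV{e}{t}}_2^2$ (or possibly the sum over all edges of $E$ of the squared length capped at $\eps^2$, which is cleaner because edges can leave and re-enter the influence network); note $\Phi(S_0) \le \abs{E}\eps^2$, which gives the claimed second inequality. The plan is to show that whenever an agent $v$ moves, $\Phi$ drops in expectation by an amount proportional to $\norm{\moveV[t]{v}}_2^2$, and separately that any non-$\delta$-stable state admits an activation that forces a move of size $\norm{\moveV[t]{v}}_2 \ge \delta/(\text{poly})$. Combining a per-step expected decrease of order $\delta^2/\abs{E}$ with the total budget $\Phi(S_0) \le \abs{E}\eps^2$ yields $\Oh(\abs{E}^2\eps^2/\delta^2)$ moves, and multiplying by $n$ to account for the probability $1/n$ that the "right" agent is the one activated gives the stated $\Oh(n\abs{E}^2(\eps/\delta)^2)$ bound.

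The first key step is the \emph{one-step drop lemma}: if $v$ is activated in step $t$, then
\[
\Phi(S_t) - \Phi(S_{t+1}) \ge c\,\abs{\Nv[t]{v}}\cdot\norm{\moveV[t]{v}}_2^2
\]
for some constant $c$. This is the HK-analogue of the classical synchronous energy-decrease computation (as in Bhattacharyya--Shiragur): expand $\norm{\posV{u}{t+1}-\posV{v}{t+1}}_2^2$ for each $u \in \Nv[t]{v}$ using $\posV{v}{t+1} = \posV{v}{t} + \moveV[t]{v}$, sum, and use that $\moveV[t]{v}$ is (up to the factor $\abs{\Nv[t]{v}}$) the sum of the vectors $\posV{u}{t}-\posV{v}{t}$ over the influencing neighborhood, so cross terms telescope favorably; crucially, edges of $G$ that are \emph{not} currently in $\Et{t}$ and edges incident to $v$ that leave $\Et{t}$ after the move cannot increase $\Phi$ under the capped definition. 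One must also check edges incident to $v$ whose other endpoint is a neighbor in $G$ but was outside $\Nv[t]{v}$: moving $v$ could bring such an edge into $\Et{t}$ at length $\le\eps$, but the cap ensures its contribution was already counted as $\eps^2$, so again $\Phi$ does not go up.

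The second key step is the \emph{progress lemma}: in any state that is not $\delta$-stable, there is an agent whose activation moves it by at least $\delta/2$ (indeed, if edge $\{u,v\}\in\Et{}$ has $\norm{\posV{\{u,v\}}{}}_2 > \delta$, then $v$ has at least one influencing neighbor at distance $>\delta$, and averaging over $\Nv{v}\ni v$ pulls $v$ by at least $\delta/\abs{\Nv{v}} \ge \delta/n$ — one gets a $1/n$ factor here which is where a careful bound is needed). Combining: conditioned on being in a non-$\delta$-stable state, with probability $\ge 1/n$ we activate such an agent and, by the drop lemma, $\Phi$ decreases by $\ge c\cdot(\delta/n)^2$ in that step; so the expected per-step decrease is $\Omega(\delta^2/n^3)$. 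A standard optional-stopping / drift argument (Markov-chain version: $\E[\text{hitting time}] \le \Phi(S_0)/(\text{expected per-step drop}$)) then bounds the expected number of steps by $\Oh(\Phi(S_0)n^3/\delta^2)$.

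The main obstacle is getting the \emph{right} polynomial — naively the progress lemma loses a factor $n$ (size of a neighborhood) and the drop lemma's $\abs{\Nv[t]{v}}$ factor only partially compensates, which would give $n^3$ rather than the claimed $n\abs E$. To sharpen this one should \emph{not} look for a single large move but charge progress to edges: in a non-$\delta$-stable state pick an influence-network edge of length $>\delta$; whichever endpoint is activated moves, and one wants the \emph{expected squared movement, conditioned on activating one of the two endpoints of a fixed long edge}, to be $\Omega(\delta^2)$ rather than $\Omega(\delta^2/n^2)$ — this holds because the movement vector of an endpoint, projected onto the edge direction, has magnitude at least (the length-$>\delta$ contribution)$/\abs{\Nv{}}$, and summing the drop $\abs{\Nv[t]{v}}\norm{\moveV[t]{v}}_2^2 \gtrsim \abs{\Nv[t]{v}}\cdot(\delta/\abs{\Nv[t]{v}})^2 = \delta^2/\abs{\Nv[t]{v}}$, and one then has to argue the relevant neighborhood sizes are controlled on average by $\abs E/n$. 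Making this averaging precise — effectively replacing a worst-case $n$ by a global $\abs E/n$ via the handshake identity $\sum_v \abs{\Nv{v}} = \Oh(\abs E)$ — is the delicate part and is exactly what produces the $n\abs E$ in $\Oh(\Phi(S_0)n\abs E/\delta^2)$.
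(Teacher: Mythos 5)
Your skeleton---the capped potential $\Phi$, a one-step drop bound of the form $\Phi(S_t)-\Phi(S_{t+1})\geq(|\Nv[t]{v}|+1)\|\moveV[t]{v}\|_2^2$ (this part is essentially the paper's \cref{lem:potentialDrop}, including the observation that capping at $\eps^2$ handles edges entering or leaving the influence network), and a drift/hitting-time conclusion---matches the paper. The genuine gap is in your progress step. Both your naive claim (an endpoint $v$ of an influence edge of length $>\delta$ is pulled by at least $\delta/|\Nv{v}|$) and the sharpened version you build the $n|E|$ bound on (the movement of an endpoint of a fixed long edge, projected on the edge direction, has magnitude at least $\delta/|\Nv{v}|$, hence conditional expected drop $\gtrsim\delta^2/|\Nv{v}|$) are false: the pull of the far endpoint can be exactly cancelled by influencing neighbors on the other side. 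Concretely, take the $1$-dimensional path $a-u-v-w-b$ with positions $-2\eps,-\eps,0,\eps,2\eps$: every influence edge has length $\eps>\delta$, yet $u$, $v$ and $w$ all have zero movement, so both endpoints of the long edge $\{u,v\}$ are stationary and only $a$ and $b$ move. Hence no charging scheme that localizes progress to the two endpoints of a chosen long edge can work, and the subsequent averaging of neighborhood sizes via the handshake identity has nothing to average; the weaker pigeonhole fallback only yields a per-step drop of order $\delta^2/n^3$ and thus not the claimed bound.

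What the paper does instead, and what is missing from your proposal, is a \emph{global, cut-based} lower bound on the weighted total movement. It first projects the state to one dimension along a longest influence edge (\cref{lem:project}: the projected edge keeps its length, the influence network is unchanged, and $\sum_v|\Nv{v}|\,\|\moveV{v}\|_2$ only decreases), then cuts the line between the two endpoints of that edge and shows (\cref{lem:summedMovement}) that on each side of the cut the signed, $|\Nv{v}|$-weighted movements telescope to exactly the total length of the crossing edges, giving $\sum_v|\Nv[t]{v}|\,\|\moveV[t]{v}\|_2\geq 2\lambda_t$ where $\lambda_t$ is the longest edge length---so even when the long edge's endpoints do not move, other agents must carry the movement. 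Cauchy--Schwarz with weights $\sqrt{|\Nv[t]{v}|}$ and $\sum_v|\Nv[t]{v}|=\Oh(|\Et{t}|)$ (the rigorous form of your handshake idea, but applied to the whole vertex set rather than to one edge's endpoints) then yields $\E[\Phi(S_t)-\Phi(S_{t+1})]\geq 2\lambda_t^2/(n|\Et{t}|)\geq 2\delta^2/(n|E|)$ while the state is not $\delta$-stable, and dividing $\Phi(S_0)\leq|E|\eps^2$ by this drop gives the theorem. Without this cut argument (or an equivalent global movement bound), your proposal does not establish $\Oh(\Phi(S_0)\,n|E|/\delta^2)$.
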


For graphs with $\abs{E} = \Oh(n)$, for example graphs with constant maximum node degree, the theorem immediately shows an expected convergence time of $\Oh(n^3\left(\eps/\delta\right)^2)$. Interestingly, our upper bound on the expected convergence time in the asynchronous process on arbitrary social networks is of the same order as the best known upper bound of $\Oh(n^5\left(\eps/\delta \right)^2)$ for the synchronous process~\cite{BhattacharyyaS15} where \emph{all} agents are activated in parallel. 

Furthermore, we show that the convergence time stated in \cref{thm:upper-bound} also transfers to the model of Etesami and Başar~\cite{EtesamiB15}. They showed that a HKS with asynchronous opinion updates on a complete social network converges to a \emph{$\delta$-equilibrium} in $\Oh(n^9\left(\eps/\delta \right)^2)$ steps, thus it is a major improvement over their analysis. However, since on arbitrary social networks $\delta$-stability does not imply a $\delta$-equilibrium, it is open if the bound given in Theorem~\ref{thm:upper-bound} also holds for the convergence time to $\delta$-equilibria.

Moreover, for the special case of a complete social network with asynchronous opinion updates, i.e., the case considered by Etesami and Başar~\cite{EtesamiB15}, we show the following even stronger result that holds for arbitrary $\delta$:

\begin{restatable}{theorem}{UpperBoundComplete}
\label{UpperBoundComplete}
Let $\HKS$ be any instance of a $d$-dimen\-sional HKS and let $G=K_n$ be the complete social network. 
Using uniform random asynchronous update steps, the expected convergence time to a 
$\delta$-stable state is at most $\Oh\left( n^3\left(n^2 +(\eps/\delta)^2\right)\right)$.
\end{restatable}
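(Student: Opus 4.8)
The plan is to split the execution into three phases and to re-invoke \Cref{thm:upper-bound} in the last one, once the potential has been driven down to $\Oh(\eps^2)$: the structural first phase is where the term $n^3\cdot n^2=n^5$ comes from, Phase~2 is a fast geometric contraction, and Phase~3 (the restart) contributes only $\Oh(n^3(\eps/\delta)^2)$.

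The structural core is that, on $K_n$, once the influence network $\CG{t}$ has stopped changing, each of its connected components is a clique. Indeed, a component $C$ of a fixed influence network evolves in isolation (on $K_n$ the agents of $C$ average only over $C$, and no outside agent is affected), so it performs gossip-type averaging on the fixed connected graph $C$; its diameter is then non-increasing and converges to $0$, so if $C$ were not complete, two non-adjacent agents of $C$ would eventually come within distance $\eps$, creating a new edge and contradicting the assumption that the network no longer changes. Phase~1 is the prefix of the execution until the influence network has reached, and no longer leaves, the shape of a disjoint union of cliques $P_1,\dots,P_k$ that stay pairwise at distance more than $\eps$; I will argue this takes $\Oh(n^5)$ steps in expectation.

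Phases~2 and~3 exploit the clique structure. In a clique $P$, activating $v\in P$ places $v$ exactly at the centroid $\bar{x}_P$ of the positions of $P$, so $\len{\moveV[t]{v}}=\len{\posV{v}{t}-\bar{x}_P}$, and a short computation shows that $\phi_P=\sum_{\{u,w\}\subseteq P}\len{\posV{u}{t}-\posV{w}{t}}^2=\abs P\sum_{v\in P}\len{\posV{v}{t}-\bar{x}_P}^2$ decreases by $(\abs P+1)\len{\moveV[t]{v}}^2\ge 0$ when $v$ is activated and is unchanged otherwise. Since $\sum_{v\in P}\len{\moveV[t]{v}}^2=\phi_P/\abs P$, the quantity $\Phi=\sum_P\phi_P$ --- which, as all influence edges now lie inside cliques, is the sum of squared influence-edge lengths, i.e.\ the potential of \Cref{thm:upper-bound} --- is non-increasing and satisfies $\E[\Phi(t+1)\mid\Phi(t)]\le(1-1/n)\Phi(t)$. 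As $\Phi\le n^2\eps^2$ at the start of Phase~2, the expected time until $\Phi\le\eps^2$ is $\Oh(n\log n)$. Phase~3 then applies \Cref{thm:upper-bound} to the state reached at the end of Phase~2 as a fresh initial state: its potential is $\Oh(\eps^2)$ and $\abs E=\binom n2$, so the expected number of further steps to reach a $\delta$-stable state is $\Oh(\eps^2\cdot n\abs E/\delta^2)=\Oh(n^3(\eps/\delta)^2)$. Adding up, $\Oh(n^5)+\Oh(n\log n)+\Oh(n^3(\eps/\delta)^2)=\Oh(n^3(n^2+(\eps/\delta)^2))$.

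The main obstacle is the $\Oh(n^5)$ bound on Phase~1. Before the network settles it genuinely oscillates --- edges appear and disappear, components merge and split --- so there is no obvious monotone structural quantity, and the generic per-step potential drop of $\Omega(\delta^2/\operatorname{poly}(n))$ behind \Cref{thm:upper-bound} only yields $\Oh(n^5(\eps/\delta)^2)$ here, a spurious factor $(\eps/\delta)^2$. I would instead charge the (purely structural) time between consecutive changes of the influence network one by one, bounding both their number and the expected time between two of them by adapting the $\Oh(n^3)$ to $\Oh(n^4)$ synchronous convergence analyses for complete-graph Hegselmann--Krause systems to single-agent activation (which costs one factor of $n$), and by using the monotone shrinking of each current cluster's convex hull to control re-merging. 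Making this precise --- above all, the combinatorial count of structural reconfigurations --- is the technical heart of the proof.
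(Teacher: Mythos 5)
Your Phases~2 and~3 are essentially sound (on a frozen clique the centroid-averaging contraction and the restart of \cref{thm:upper-bound} with potential $\Oh(\eps^2)$ do give $\Oh(n\log n)+\Oh(n^3(\eps/\delta)^2)$), but in your decomposition the entire content of the theorem sits in the Phase~1 bound, and that bound is a genuine gap. You define Phase~1 to end at the first time after which the influence network is, for all future steps, a disjoint union of cliques pairwise more than $\eps$ apart, and you concede that you can only sketch how $\Oh(n^5)$ might be obtained: by counting structural reconfigurations and adapting the synchronous complete-graph analyses. Neither ingredient is available. Nothing in the asynchronous dynamics gives a polynomial bound on how often influence edges can appear or disappear; the synchronous $\Oh(n^3)$--$\Oh(n^4)$ results bound convergence to perfectly stable states, which is precisely the notion whose asynchronous analogue is problematic (on general networks its expected time is even infinite, and you offer no argument why the complete graph escapes this for your ``settling'' event); and ``the network never changes again'' is a look-ahead event, not a stopping time, which further complicates the drift arguments you want to run in Phases~2 and~3. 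As it stands, the only bound you actually have for Phase~1 is the generic $\Oh(n^5(\eps/\delta)^2)$ from \cref{thm:upper-bound}, which you yourself note is not good enough.

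The paper's proof avoids waiting for the network to settle at all: it classifies each \emph{single step} by the length $\lambda_t$ of the currently longest influence edge. If $\lambda_t\ge\eps/2$, then \cref{lma:ExpPotentialDrop} gives an expected potential drop of $\Omega\left(\eps^2/(n|E|)\right)$ with no $\delta$-dependence, so in expectation there are at most $\Oh(n|E|^2)=\Oh(n^5)$ such steps, since $\Phi\le|E|\eps^2$. If instead all influence edges have length at most $\eps/2$, then on $K_n$ every component of the \emph{current} influence network is a clique (two non-adjacent vertices in a component would admit a shortest path whose first two edges have total length at most $\eps$, forcing the chord), regardless of whether the network changes later; projecting along the longest edge and running the argument of \cref{lem:summedMovement} componentwise shows the agents of each component $C_i$ move in total by at least $\lambda_i(t)$, which yields an expected drop of at least $\delta^2/n$ (note: no $|E|$ factor), hence at most $\Oh(n|E|(\eps/\delta)^2)=\Oh(n^3(\eps/\delta)^2)$ steps in which some edge still exceeds $\delta$. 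Your clique observation and your final $\Oh(n^3(\eps/\delta)^2)$ term are close in spirit to this second part, but to complete your proof you would need to replace the Phase~1 plan by such a pointwise step-classification (or actually supply the reconfiguration-counting argument, which appears harder than the theorem itself).
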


To prove these results, we extend the potential function used in~\cite{EtesamiB15}.
The main ingredient for strongly improving the upper bound derived in \cite{EtesamiB15} is to significantly tighten and generalize their proof. To do so, we develop a projection argument (see \cref{lem:project}) and a new analysis of the expected available movement of a randomly chosen agent. This allows us to improve the bound on the expected drop of the potential function (see \cref{lma:ExpPotentialDrop}).
 
To complement our upper bound results, we demonstrate that our analysis method is tight in the sense that by using this potential function and studying the step-by-step drop, one cannot improve the results. 
We present a family of instances and initial states where the expected potential drop is exactly of the same order as our upper bound (see \cref{rem:RemakTightniss}).  
Moreover, we present a family of one-dimensional HKSs and initial states where  $\Omega(\Phi(S_0)n\abs{E}/\eps^2)$ steps are needed to reach a $\delta$-stable state (see \cref{thm:lower-bound}), thereby matching the upper bound shown in Theorem~\ref{thm:upper-bound} in terms of the order of $n$.

\begin{restatable}{theorem}{thmLowerBound}
\label{thm:lower-bound}
For $\eps= 1$ and $\delta < 1/2$, there exists a family of social 1-dimensional HKSs $(S_{0,4n})_{n \in \mathbb{N}}$ where any given update sequence needs at least $\Omega(\Phi(S_{0,4n})n \abs{E}/\eps^2)=\Omega(n^4)$ steps to reach a $\delta$-stable state.
\end{restatable}

Notably, this lower bound applies for \emph{arbitrary update sequences}, while our upper bound applies when the updating agent is chosen uniformly at random. Thus, even when resorting to a smarter choice of the updating agent, one cannot drastically reduce the convergence time in the worst case.

Last but not least, in \cref{sec:simulations} we provide some simulation results for two specific social network topologies. Our empirically derived lower bounds asymptotically match our theoretically proven upper bound from \cref{thm:upper-bound}.

\section{Social Hegselmann-Krause Systems}
\label{sec:MainResults}
In this section we prove \cref{thm:upper-bound} in three steps. 
Recall that for a HKS in $d$ dimensions the opinion are represented by points in the $d$-dimensional Euclidean space. 
First, in \cref{lem:project} we show that for each HKS in $d$ dimensions there exists a mapping to a suitable $1$-dimensional HKS, such that the length of all edges does not increase, and the influence network (consisting of the active edges) as well as the length of the longest edge $\lambda$ is preserved. 
We use this projection in the second step (see \cref{lem:min_movement}) where we only consider HKS in one dimension: 
We prove that $\sum_{v \in V}(\lvert\Nv[t]{v}\rvert\cdot \|\moveV[t]{v}\|_2) \geq 2\lambda$, where $\Nv[t]{v}$  is the set of neighbors in the influence network and $\moveV[t]{v}$ is the available movement of the node.
In the third step, we prove that the potential drop due to activating an agent $v$ can be lower bounded by $(\abs{\Nv[t]{v}}+1) \cdot \|\moveV[t]{v}\|_2^2$ (see \cref{lem:potentialDrop}).
Finally, in \cref{lma:ExpPotentialDrop} we combine these three insights to bound the potential drop.

Let $S  = \stateHKS$ be a state of some $d$-dimensional HKS with influence network $\infNetw{}$.
For some arbitrary edge $e=\{u,w\} \in \Et{}$, we will project the state $S$ to a state $\bar{S}_e$ of some $1$-dimensional HKS.
We define the projected state $\bar{S}_e$ along edge $e=\{u,w\}$ with the help of the projection vector \[p = \frac{(\posV{u}{} - \posV{w}{})}{\|\posV{u}{} - \posV{w}{}\|_2}\;,\] where the order of $u$ and $w$ is chosen arbitrarily.
We define
\[\bar{S}_e = \stateHKSProj\;,\] 
as follows. 
We project the position of each agent $v \in V$ to
\[\posVProj{v}{} = \posV{v}{}^\top p \in \mathbb{R}\;.\]
Furthermore, in the graph $\graphProj{}$ of the projected system, we restrict the set of edges $\bar{E}$ to the ones, which are edges of the influence network in the original state, i.e.,
$\bar{E} = \Et{}$.
For an agent $v \in V$, we denote by $\NvProj{v}$ its influencing neighborhood, and by $\moveVProj{v}$ its available movement in $\bar{S}_e$.

In the following lemma, we prove that the projected system behaves similarly to the original system in the sense that the length of the edge $e$ stays the same and the influence network does not change. 
Furthermore, the agents in the original HKS move at least as much as the agents in the projected state, when activated.

\begin{lemma}
\label{lem:project}
Let $S = \stateHKS$ be a state of a $d$-dimensional HKS with influence network $\infNetw{}$
and $e = \{u,w\} \in \Et{}$.
Then for any $v,v' \in V$ and the projected state $\bar{S}_e$ defined as above it holds that

\begin{align}
\|\posV{u}{}-\posV{w}{}\|_2 &= \lvert\posVProj{u}{} - \posVProj{w}{}\rvert\;, \label{en:lem:project:1} \\
\|\posV{v}{}-\posV{v'}{}\|_2 & \geq  \lvert\posVProj{v}{} - \posVProj{v'}{}\rvert\;, \label{en:lem:project:3} \\
\Nv{v} &= \NvProj{v}\;,\label{en:lem:project:2} \text{ and}  \\
\sum_{v \in V} (\lvert\Nv{v}\rvert\cdot\|\moveV[]{v}\|_2) &\geq \sum_{v \in V} (\lvert\NvProj{v}\rvert\lvert\moveVProj[]{v}\rvert) \;.\label{en:lem:project:4}
\end{align}
\end{lemma}

\begin{proof}
Let $p$ be the projection vector used to generate $\bar{S}_e$.
To see statement \eqref{en:lem:project:1} note that 
\begin{align*}
    &\lvert\posVProj{u}{} - \posVProj{w}{}\rvert \\
    &= \left\lvert\posV{u}{}^\top p - \posV{w}{}^\top p \right\rvert = \left\lvert(\posV{u}{}-\posV{w}{})^T p\right\rvert\\
    &=\left\lvert \frac{(\posV{u}{} -\posV{w}{})^\top (\posV{u}{} -\posV{w}{})}{ \norm{(\posV{u}{} -\posV{w}{})}_2}\right\rvert\\
    &= \norm{(\posV{u}{} -\posV{w}{})}_2.
\end{align*}
To prove statement \eqref{en:lem:project:3}, we show that for each pair $v,v' \in V$ it holds that $\|\posV{v}{}-\posV{v'}{}\|_2 \geq \lvert\posVProj{v}{}-\posVProj{v'}{}\rvert$:
\begin{align*}
    &\lvert\posVProj{v}{}-\posVProj{v'}{}\rvert \\
    &= \left\lvert\posV{v}{}^\top p- \posV{v'}{}^\top p\right\rvert
    =  \left\lvert(\posV{v}{} -\posV{v'}{})^\top p\right\rvert\\
    &= \left\lvert\frac{(\posV{v}{} -\posV{v'}{})^\top (\posV{u}{} - \posV{w}{})}{ \norm{(\posV{u}{} -\posV{w}{})}_2}\right\rvert\\
    &\overset{\text{C.-S.}}{\leq} \norm{(\posV{v}{} - \posV{v'}{})}_2\;.
\end{align*}
The last inequality uses the Cauchy-Schwarz inequality (C.-S.).

To see \eqref{en:lem:project:2}, note that (because of \eqref{en:lem:project:3}) the difference between projected positions of agents is at most as large as the difference between their original positions. Since $\bar{E}$ contains only the edges of the influence network in the original state, it holds that $\Nv{v} = \NvProj{v}$.

Finally, it holds that 
\begin{align*}
&\|\moveV{v}\|_2\\
&= \left\|\frac{\sum_{u \in \Nv{v}} (\posV{u}{} - \posV{v}{})}{\abs{\Nv{v}}} \right\|_2 \\
&= \frac{\left\|\sum_{u \in \Nv{v}} (\posV{u}{} - \posV{v}{})\right\|_2}{\abs{\Nv{v}}}  \\
&\overset{\text{C.-S.}}{\geq} \frac{\left\lvert\left(\sum_{u \in \Nv{v}} (\posV{u}{} - \posV{v}{})\right)^\top (\posV{u}{} - \posV{w}{})\right\rvert}{\abs{\Nv{v}}\|\posV{u}{} - \posV{w}{}\|_2}\\
&= \left\lvert\frac{\left(\sum_{u \in \Nv{v}} (\posV{u}{}^\top p - \posV{v}{}^\top p)\right)}{\abs{\Nv{v}}}\right\rvert  \\
&= \left\lvert \frac{\sum_{j \in \NvProj{v}}(\posVProj{u}{} - \posVProj{v}{})}{\abs{\NvProj{v}}} \right\rvert = \lvert\moveVProj{v}\rvert\;,
\end{align*}
and hence 
\begin{align*}
\sum_{v \in V} \lvert\Nv{v}\rvert\|\moveV{v}\|_2 
&\geq 
\sum_{v \in V} \abs{\NvProj{v}}\lvert\moveVProj{v}\rvert\;.
\end{align*}
\end{proof}
\noindent We now prove a lower bound on the total available movement of agents.

\begin{lemma}
\label{lem:summedMovement}
Let $S =\stateHKS$ be a state of a $1$-dimensional HKS, let $c\in \mathbb{R}$ and $V_\ell = \{v \in V \mid \posV{v}{} \leq c\}$ and $V_r = V \setminus V_\ell$.
Define $E_{\ell,r} = \{\{u,w\} \in E_I \mid u \in V_l, w \in V_r\}$.
Then it holds that
\[\sum_{v \in V}\abs{\Nv{v}}\abs{\moveV{v}} \geq 2 \sum_{e \in E_{\ell,r}} \|\posV{e}{}\|_2 \]
\end{lemma}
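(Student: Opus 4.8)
The plan is to exploit the one-dimensional structure and the fact that every cut edge in $E_{\ell,r}$ must be "pulled across" the threshold $c$ by the movements of its two endpoints. First I would fix the orientation so that agents in $V_\ell$ sit at positions $\le c$ and agents in $V_r$ at positions $> c$, and I would consider the signed movements: for $v \in V_\ell$ the quantity $\moveV{v}$ tends to be nonnegative (its influencing neighbors, being within distance $\eps$, pull it toward or across $c$ on the right side), and for $v \in V_r$ the quantity $\moveV{v}$ tends to be nonpositive. The natural object to track is the total "flow of position mass" across the cut, namely $\sum_{v \in V_\ell} |\Nv{v}|\,\moveV{v} \;=\; \sum_{v \in V_\ell}\sum_{u \in \Nv{v}}(\posV{u}{}-\posV{v}{})$, and its mirror image on $V_r$.

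The key algebraic step is to expand these sums edge by edge. Writing $\sum_{v \in V_\ell}\sum_{u \in \Nv{v}}(\posV{u}{}-\posV{v}{})$, the contribution of an edge $\{a,b\}$ with both endpoints in $V_\ell$ cancels (it appears once as $\posV{b}{}-\posV{a}{}$ and once as $\posV{a}{}-\posV{b}{}$), the self-loop term $v \in \Nv{v}$ contributes $0$, and a cut edge $\{u,w\}$ with $u \in V_\ell$, $w \in V_r$ contributes exactly $\posV{w}{}-\posV{u}{} = \|\posV{e}{}\|_2 \ge 0$ (using that $\posV{w}{} > c \ge \posV{u}{}$ and the $1$-dimensionality). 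Hence $\sum_{v \in V_\ell} |\Nv{v}|\,\moveV{v} = \sum_{e \in E_{\ell,r}} \|\posV{e}{}\|_2$, and symmetrically $\sum_{v \in V_r} |\Nv{v}|\,(-\moveV{v}) = \sum_{e \in E_{\ell,r}} \|\posV{e}{}\|_2$. Adding the two and bounding $|\moveV{v}| \ge \moveV{v}$ on $V_\ell$ and $|\moveV{v}| \ge -\moveV{v}$ on $V_r$ gives
\[
\sum_{v \in V}|\Nv{v}|\,|\moveV{v}| \;\ge\; \sum_{v \in V_\ell}|\Nv{v}|\,\moveV{v} \;+\; \sum_{v \in V_r}|\Nv{v}|\,(-\moveV{v}) \;=\; 2\sum_{e \in E_{\ell,r}} \|\posV{e}{}\|_2,
\]
which is the claim.

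The step I expect to require the most care is the sign bookkeeping in the edge-by-edge expansion: one must be careful that each internal edge of $V_\ell$ really does cancel (it does, because it is counted once from each endpoint with opposite sign), that the self-loops contribute nothing, and — crucially — that each cut edge contributes with the \emph{correct sign}, i.e. that $\posV{w}{}-\posV{u}{} = \|\posV{e}{}\|_2$ rather than its negative; this is exactly where one-dimensionality and the definition $V_\ell = \{v : \posV{v}{} \le c\}$ are used, since in higher dimensions the projections onto the cut direction need not be ordered this way. Everything else is a direct manipulation of the definitions of $\Nv{v}$ and $\moveV{v}$ given in \cref{sec:model}.
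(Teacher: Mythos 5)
Your proposal is correct and follows essentially the same route as the paper's own proof: bound $|\moveV{v}|$ by the signed movement on each side of the cut, expand $|\Nv{v}|\moveV{v}$ edge by edge so that internal edges cancel and only the cut edges of the influence network survive with the correct sign, and add the two contributions. Your write-up is in fact slightly more explicit about the cancellation and sign bookkeeping than the paper's, but there is no substantive difference.
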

\begin{proof}
We observe
\begin{align*}
\MoveEqLeft \sum_{v \in V_\ell} \abs{\Nv{v}}\abs{\moveV{v}} \geq \sum_{v \in V_\ell} \abs{\Nv{v}}\moveV{v}\\ 
=& \sum_{v \in V_\ell} \abs{\Nv{v}} \sum_{u \in \Nv{v}} \frac{\posV{u}{} -\posV{v}{}}{\abs{\Nv{v}}} \\
=& \sum_{v \in V_\ell}\sum_{u \in V_r \cap \Nv{v}} (\posV{u}{} -\posV{v}{})\\
&+ \sum_{v \in V_\ell}\sum_{u \in V_\ell \cap \Nv{v}} (\posV{u}{} -\posV{v}{})\;.
\end{align*}
Note that for each edge $e = \{v, u\}$ with $v,u \in V_{\ell}$ the second sum contains $x(u) -x(v)$ as well as $x(v) -x(u)$. As such,
\[
\sum_{v \in V_\ell}\sum_{u \in V_\ell \cap \Nv{v}} (\posV{u}{} -\posV{v}{}) = 0
\]
Furthermore, for each edge $e = \{v, u\}$ with $v \in V_{\ell}$ and $u \in V_r$ it holds that $x(u)-x(v) > 0$, since $x(u) > 0$ and $x(v) < 0$.
As a consequence,
\begin{align*}
\MoveEqLeft \sum_{v \in V_\ell} \abs{\Nv{v}}\abs{\moveV{v}} \geq \sum_{v \in V_\ell}\sum_{u \in V_r \cap \Nv{v}} (\posV{u}{} -\posV{v}{})\\
=& \sum_{e \in E_{\ell,r}} \|\posV{e}{}\|_2\;.
\end{align*}
Similarly, it holds that
\begin{align*}
\sum_{v \in V_r} \abs{\Nv{v}}\abs{\moveV{v}} &\geq \left\lvert \sum_{v \in V_r} \abs{\Nv{v}}\moveV{v} \right\rvert\\
&= \sum_{e \in E_{\ell,r}} \|\posV{e}{}\|_2\;.
\end{align*}
The lemma follows by combining the two results as $V_r = V \setminus V_{\ell}$.
\end{proof}

\begin{corollary}
\label{lem:min_movement}
Let $S = \stateHKS$ be a state of a $d$-dimensional HK system. 
Let $\lambda$ be the length of a longest edge in the influence network.
Then
\[
\sum_{v \in V}\abs{\Nv[t]{v}}\cdot \|\moveV[t]{v}\|_2 \geq 2 \lambda .
\]
\end{corollary}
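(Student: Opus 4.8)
The plan is to deduce this $d$-dimensional estimate from the one-dimensional bound in \cref{lem:summedMovement}, by applying the projection of \cref{lem:project} in the direction of a longest influence-network edge, so that this particular edge keeps its full length after projection while all other edges can only shrink.

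First I would dispose of the trivial case $\lambda = 0$, where the claim holds simply because every summand $|\Nv[t]{v}|\cdot\|\moveV[t]{v}\|_2$ on the left is nonnegative. So assume $\lambda > 0$ and fix a longest edge $e = \{u,w\}$ of the influence network of $S$, i.e.\ $\|\posV{u}{}-\posV{w}{}\|_2 = \lambda$. Applying \cref{lem:project} to $S$ and this edge $e$ produces a one-dimensional state $\bar S_e$ with three properties: (i) $|\posVProj{u}{}-\posVProj{w}{}| = \lambda$, so the projected endpoints of $e$ are distinct, and after relabeling we may assume $\posVProj{u}{} < \posVProj{w}{}$; (ii) $\NvProj{v} = \Nv{v}$ for every $v$, so the influence networks of $S$ and $\bar S_e$ coincide and $e$ is still an edge of the influence network of $\bar S_e$; and (iii) $\sum_{v\in V}|\Nv[t]{v}|\cdot\|\moveV[t]{v}\|_2 \ge \sum_{v\in V}|\NvProj{v}|\cdot|\moveVProj{v}|$.

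Next I would apply \cref{lem:summedMovement} to the one-dimensional state $\bar S_e$ with the threshold $c := \posVProj{u}{}$. This puts $u \in V_\ell$ and $w \in V_r$, so $e \in E_{\ell,r}$; since edge lengths are nonnegative, the sum $\sum_{e' \in E_{\ell,r}}\|\posVProj{e'}{}\|_2$ is at least the single term contributed by $e$, which equals $|\posVProj{u}{}-\posVProj{w}{}| = \lambda$. Hence \cref{lem:summedMovement} gives $\sum_{v\in V}|\NvProj{v}|\cdot|\moveVProj{v}| \ge 2\lambda$, and chaining this with inequality~(iii) above yields the corollary.

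I do not expect any genuine obstacle here: all the substantive work has been front-loaded into \cref{lem:project} and \cref{lem:summedMovement}. The only point that needs attention is ensuring that the cut used in \cref{lem:summedMovement} really does separate the two endpoints of a longest edge — which is exactly why we invoke part~(i) of \cref{lem:project}, guaranteeing that the endpoints of $e$ remain distinct after projection — and observing that discarding the remaining edges of $E_{\ell,r}$ is harmless, since their lengths only increase the right-hand side.
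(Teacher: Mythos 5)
Your proof is correct and follows essentially the same route as the paper's: project along a longest influence-network edge via \cref{lem:project}, then apply \cref{lem:summedMovement} with a cut separating the projected endpoints of that edge. The only (immaterial) differences are your choice of threshold $c=\posVProj{u}{}$ instead of the midpoint used in the paper and your explicit treatment of the degenerate case $\lambda=0$, which the paper leaves implicit.
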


\begin{proof}
Let edge $e = \{ v_\ell,v_r \} \in \mathcal{E}$ be a longest edge in the influence network and $\|\posV{e}{}\|_2 = \lambda$.
Let $\bar{S}_e = \stateHKSProj$ be the state projected to one dimension along the edge $e$. 
By \cref{lem:project} \cref{en:lem:project:4}, we know that 
\[
\sum_{v \in V}\abs{\Nv{v}}\cdot \|\moveV{v}\|_2 \geq \sum_{v \in V}\abs{\NvProj{v}}\cdot \abs{\moveVProj{v}}\;.\]
Furthermore, by \cref{lem:project}, we know that the influence network in both systems has the same set of edges (\cref{en:lem:project:2}), the longest edge $e$ preserves its length in the projection (\cref{en:lem:project:1}), and all other edges do not increase their length (\cref{en:lem:project:3}). Therefore, the length of the longest edge in the influence network of $\bar{I}_t$ is equal to the length of the longest edge in $I_t$.
Hence $e = \{u,w\} \in E$ is a longest edge in the influence network $\bar{I}$ with $\|\posV{e}{}\|_2 = \lambda$.

Analogously to \cref{lem:summedMovement}, we partition $V$ into two sets $V_\ell$ and $V_r$ at $c = (\posV{{u}}{}+\posV{{w}}{})/2$ and define $E_{\ell,r} = \{\{v, v'\} \mid v \in V_\ell, v' \in V_r\}$.
Note that $e  \in E_{\ell,r}$ and hence
$
    \sum_{v \in V} \abs{\moveVProj{v}} \abs{\NvProj{v}} \geq   2 \sum_{e \in E_{\ell,r}} \|\posV{e}{}\|_2 \geq 2 \lambda\;.
$
\end{proof}

In the next step, we will prove a lower bound on the drop in the potential when updating any agent $v \in V$.
\begin{restatable}{lemma}{EtesamiBasar}
\label{lem:potentialDrop}
Let $S_t = \stateHKS[t]$ be the state of some $d$-dimensional HKS $\HKS$. 
Suppose we update the position of agent $v$ and $v$ moves by $\moveV[t]{v}$. 
Let \[S_{t+1} = \stateHKS[t+1]\] be the new state. The potential decreases by at least 
\[\Phi(S_t) - \Phi(S_{t+1}) \geq (\abs{\Nv[t]{v}}+1) \cdot \|\moveV[t]{v}\|_2^2.\]
If the influence network does not change from step $t$ to $t+1$, we obtain equality.
\label{lem:etesami_modified}
\end{restatable}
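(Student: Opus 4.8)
The plan is to use the fact that when agent $v$ is activated, only the lengths of edges incident to $v$ change, so that
\[
\Phi(S_t)-\Phi(S_{t+1})=\sum_{\{u,v\}\in E}\Bigl(\min\{\len{\posV{u}{t}-\posV{v}{t}}^2,\eps^2\}-\min\{\len{\posV{u}{t}-\posV{v}{t+1}}^2,\eps^2\}\Bigr).
\]
First I would split these edges into the influence-network edges (those with $u\in\Nv[t]{v}$, i.e. $\len{\posV{u}{t}-\posV{v}{t}}\le\eps$) and the remaining edges with $u\notin\Nv[t]{v}$. For the latter the old term equals $\eps^2$ while the new term is $\min\{\len{\posV{u}{t}-\posV{v}{t+1}}^2,\eps^2\}\le\eps^2$, so each such edge contributes a nonnegative amount to the potential drop, and contributes exactly $0$ whenever it stays outside the influence network.

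For each influence-network edge $\{u,v\}$ the old term is exactly $\len{\posV{u}{t}-\posV{v}{t}}^2$ and the new term is at most $\len{\posV{u}{t}-\posV{v}{t+1}}^2$, again with equality if the edge remains in the influence network. Writing $\posV{v}{t+1}=\posV{v}{t}+\moveV[t]{v}$ and expanding,
\[
\len{\posV{u}{t}-\posV{v}{t}}^2-\len{\posV{u}{t}-\posV{v}{t+1}}^2=2\,(\posV{u}{t}-\posV{v}{t})^\top\moveV[t]{v}-\len{\moveV[t]{v}}^2 .
\]
Summing over all $u\in\Nv[t]{v}$ — the summand for $u=v$ vanishes, leaving $\abs{\Nv[t]{v}}-1$ nonzero terms — and inserting the defining identity $\sum_{u\in\Nv[t]{v}}(\posV{u}{t}-\posV{v}{t})=\abs{\Nv[t]{v}}\cdot\moveV[t]{v}$, the expression collapses to
\[
2\,\abs{\Nv[t]{v}}\,\len{\moveV[t]{v}}^2-(\abs{\Nv[t]{v}}-1)\,\len{\moveV[t]{v}}^2=(\abs{\Nv[t]{v}}+1)\,\len{\moveV[t]{v}}^2 .
\]
Adding the nonnegative contributions of the non-influence-network edges yields the stated lower bound; when the influence network does not change from step $t$ to $t+1$, every inequality used above is an equality, giving the claimed equality.

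The calculation is elementary, so I do not expect a genuine obstacle; the only points needing care are the bookkeeping around the self-loop convention $v\in\Nv[t]{v}$ (the telescoped sum has $\abs{\Nv[t]{v}}-1$ active summands, yet the defining identity for $\moveV[t]{v}$ ranges over all of $\Nv[t]{v}$), and the observation that truncating squared distances at $\eps^2$ can only ever help the potential drop — both for edges that enter and for edges that leave the influence network — which is exactly what makes the bound an inequality in general and an equality when the influence network is preserved.
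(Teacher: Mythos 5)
Your proposal is correct and follows essentially the same route as the paper's proof: restrict the potential drop to edges incident to $v$, use the truncation at $\eps^2$ to show entering edges contribute nonnegatively and leaving edges are lower-bounded by their untruncated differences, then expand the quadratic and apply $\sum_{u\in\Nv[t]{v}}(\posV{u}{t}-\posV{v}{t})=|\Nv[t]{v}|\,\moveV[t]{v}$. Your bookkeeping ($2|\Nv[t]{v}|$ from the cross term minus $|\Nv[t]{v}|-1$ copies of $\|\moveV[t]{v}\|_2^2$ over the actual neighbor edges) is a slightly cleaner handling of the $v\in\Nv[t]{v}$ convention than the paper's, but the argument is the same.
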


\begin{proof}
As we activate $v$, the position of agents $u \neq v$ does not change, but the set of active edges can change and is updated from $\Et{t}$ to $\Et{t+1}$.
To bound the potential change we consider edges $\Et{t} \cap \Et{t+1}$, $\Et{t} \setminus \Et{t+1}$ and $\Et{t+1} \setminus \Et{t}$.
In the set $\Et{t} \setminus \Et{t+1}$ the length of the edges increases above $\eps$ while in the set $\Et{t+1} \setminus \Et{t}$ the length decreases below $\eps$.

By the definition of $\Phi$, we have
\begin{align*}
    \MoveEqLeft \Phi(S_t) - \Phi(S_{t+1})\\
    \MoveEqLeft = \sum_{\{u,v\}\in E} ( \min\{ \|\posV{v}{t} - \posV{u}{t}\|_2^2, \eps^2\}\\
    & - \min\{ \|\posV{v}{t+1} - \posV{u}{t+1}\|_2^2, \eps^2\} )\\
    \MoveEqLeft= \sum_{\substack{\{u,v\} \in \\ \Et{t} \cap \Et{t+1}}} \left(\|\posV{v}{t} - \posV{u}{t}\|_2^2 - \|\posV{v}{t+1} - \posV{u}{t+1}\|_2^2\right)\\
    & + \sum_{\substack{\{u,v\} \in \\ \Et{t}\setminus \Et{t+1}}} \left(\|\posV{v}{t} - \posV{u}{t}\|_2^2 - \eps^2\right)\\
    & + \sum_{\substack{\{u,v\} \in \\ \Et{t+1} \setminus \Et{t}}} \left(\eps^2 - \|\posV{v}{t+1} - \posV{u}{t+1}\|_2^2\right)\\
    \MoveEqLeft\geq \sum_{\substack{\{u,v\} \in \\ \Et{t} \cap \Et{t+1}}} \left(\|\posV{v}{t} - \posV{u}{t}\|_2^2 - \|\posV{v}{t+1} - \posV{u}{t+1}\|_2^2\right)\\
    & + \sum_{\substack{\{u,v\} \in \\ \Et{t}\setminus \Et{t+1}}} \big(\|\posV{v}{t} - \posV{u}{t}\|_2^2\\
    &\quad\quad\quad\quad- \|\posV{v}{t+1} - \posV{u}{t+1}\|_2^2\big).\\
\end{align*}
Note that in this step, we have equality if $\Et{t} = \Et{t+1}$. 
We conclude
\begin{align*}
\MoveEqLeft \Phi(S_t) - \Phi(S_{t+1})\\
\MoveEqLeft \geq \sum_{\substack{\{u,v\} \in \Et{t}}} \left(\|\posV{v}{t} - \posV{u}{t}\|_2^2 - \|\posV{v}{t+1} - \posV{u}{t+1}\|_2^2\right)\\
\MoveEqLeft= \sum_{u \in\Nv[t]{v}} \left(\|\posV{v}{t} - \posV{u}{t}\|_2^2 - \|\posV{v}{t+1} - \posV{u}{t+1}\|_2^2\right)\\ 
\MoveEqLeft= \|\posV{v}{t+1} - \posV{v}{t}\|_2^2 \\
    & + \sum_{u \in\Nv[t]{v}} \big(\|\posV{v}{t} - \posV{u}{t}\|_2^2\\
    &\quad\quad\quad\quad~~ - \|\posV{v}{t+1} - \posV{u}{t}\|_2^2\big)\\
\MoveEqLeft= \|\moveV[t]{v}\|_2^2 \\
    & + \sum_{u \in\Nv[t]{v}} \big(\|\posV{v}{t} - \posV{u}{t}\|_2^2\\
    & \quad\quad\quad\quad~~- \|\posV{v}{t}+\moveV[t]{v} - \posV{u}{t}\|_2^2\big),\\
\end{align*}
Using the definition of $\|\cdot\|$, we obtain
\begin{align*}
\MoveEqLeft= \|\moveV[t]{v}\|_2^2 \\
    & +\sum_{u \in\Nv[t]{v}} (\posV{v}{t}^\top \posV{v}{t} - 2\posV{v}{t}^\top \posV{u}{t}\\
    &\quad\quad\quad+ \posV{u}{t}^\top \posV{u}{t}\\ 
    &\quad \quad \quad -(\posV{v}{t}+\moveV[t]{v})^\top (\posV{v}{t}+\moveV[t]{v})\\
    &\quad \quad \quad +2(\posV{v}{t}+\moveV[t]{v})^\top \posV{u}{t}- \posV{u}{t}^\top \posV{u}{t})\\
\MoveEqLeft= \|\moveV[t]{v}\|_2^2 \\
    & + \smashoperator{\sum_{u \in\Nv[t]{v}}} ( 
        -2\moveV[t]{v}^\top \posV{v}{t} - \moveV[t]{v}^\top \moveV[t]{v}\\
    &\quad\quad\quad+2\moveV[t]{v}^\top \posV{u}{t}) \\
\MoveEqLeft= \|\moveV[t]{v}\|_2^2 -\abs{\Nv[t]{v}} \|\moveV[t]{v}\|_2^2 \\
    & +2\moveV[t]{v}^\top \left(\sum_{u \in\Nv[t]{v}} \left(\posV{u}{t}-\posV{v}{t}\right)\right) \\ 
\MoveEqLeft= \|\moveV[t]{v}\|_2^2 -\abs{\Nv[t]{v}} \|\moveV[t]{v}\|_2^2 \\
& +2\abs{\Nv[t]{v}}\moveV[t]{v}^\top \moveV[t]{v}\\ 
\MoveEqLeft= (\abs{\Nv[t]{v}}+1)\|\moveV[t]{v}\|_2^2\;,
\end{align*}
which concludes the proof.
\end{proof}
We now have the tools to prove a lower bound on the expected potential drop in a single step.
\begin{lemma}
\label{lma:ExpPotentialDrop}
For any state $S_t=\stateHKS[t]$ of some HKS $\HKS$ in step $t$, when updating an agent chosen uniformly at random resulting in state $S_{t+1} = \stateHKS[t+1]$, the expected potential drop is at least 
\[\E[\Phi(S_t) - \Phi(S_{t+1})] \geq \frac{2 (\lambda_t)^2}{n\abs{\Et{t}}}\;,\]
where $\lambda_t$ is the length of the longest edge in the influence network $\CG{t}$ in step $t$.
\end{lemma}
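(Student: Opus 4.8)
I would fix the state $S_t$, condition on it, and write the expected potential drop as the average of the per-agent drop over the $n$ equally likely activations; the plan is then to turn the first-moment bound of \cref{lem:min_movement} on the total movement into the required second-moment bound via Cauchy--Schwarz. Note that the one-dimensional projection underlying \cref{lem:min_movement} is already absorbed into that corollary, so no separate appeal to \cref{lem:project} should be needed here.

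By linearity of expectation, if the activated agent is $v$ then \cref{lem:potentialDrop} gives a drop of at least $(|\Nv[t]{v}|+1)\,\|\moveV[t]{v}\|_2^2 \ge |\Nv[t]{v}|\cdot\|\moveV[t]{v}\|_2^2$, so
\[
\E[\Phi(S_t) - \Phi(S_{t+1})] \;\ge\; \frac1n \sum_{v \in V} |\Nv[t]{v}|\cdot \|\moveV[t]{v}\|_2^2 ,
\]
and it suffices to lower bound $\sum_{v \in V} |\Nv[t]{v}|\,\|\moveV[t]{v}\|_2^2$ by $\Omega\!\bigl(\lambda_t^2/|\Et{t}|\bigr)$. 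Since an agent isolated in $\CG{t}$ does not move, all the sums below may be restricted to $V' := \{v \in V : \deg_{\CG{t}}(v) \ge 1\}$; and if $\Et{t} = \emptyset$ or $\lambda_t = 0$ the claim is trivial, so I may assume $|\Et{t}| \ge 1$.

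Then I would apply Cauchy--Schwarz with weights $|\Nv[t]{v}|$ over $V'$,
\[
\Bigl(\sum_{v \in V'} |\Nv[t]{v}|\,\|\moveV[t]{v}\|_2\Bigr)^{2} \;\le\; \Bigl(\sum_{v\in V'} |\Nv[t]{v}|\Bigr)\Bigl(\sum_{v\in V'} |\Nv[t]{v}|\,\|\moveV[t]{v}\|_2^2\Bigr),
\]
and bound the two factors on the right. By \cref{lem:min_movement} the left-hand side is at least $(2\lambda_t)^2$, and since every vertex of $V'$ is incident to an edge of $\CG{t}$ we have $\sum_{v\in V'} |\Nv[t]{v}| = 2|\Et{t}| + |V'| \le 4|\Et{t}|$. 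Combining these gives $\sum_{v\in V'} |\Nv[t]{v}|\,\|\moveV[t]{v}\|_2^2 \ge \lambda_t^2/|\Et{t}|$, hence $\E[\Phi(S_t) - \Phi(S_{t+1})] = \Omega\!\bigl(\lambda_t^2/(n|\Et{t}|)\bigr)$, which is the asserted bound up to the constant; carrying the ``$+1$'' of \cref{lem:potentialDrop} through the Cauchy--Schwarz step (every $v \in V'$ has $|\Nv[t]{v}| \ge 2$) sharpens the constants. The step I expect to be most delicate is exactly this denominator bookkeeping: a global Cauchy--Schwarz over all of $V$ leaves $\sum_{v\in V}|\Nv[t]{v}| = n + 2|\Et{t}|$ in the denominator, which is too weak when $|\Et{t}| \ll n$; the remedy is to notice that at most $2|\Et{t}|$ agents are non-isolated, so only these contribute to the drop.
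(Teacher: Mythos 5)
Your proposal is essentially the paper's own argument: condition on $S_t$, average the per-agent drop $(|\Nv[t]{v}|+1)\|\moveV[t]{v}\|_2^2$ from \cref{lem:potentialDrop} over the $n$ equally likely activations, lower bound $\sum_{v}|\Nv[t]{v}|\,\|\moveV[t]{v}\|_2$ by $2\lambda_t$ via \cref{lem:min_movement}, and pass from first to second moments by Cauchy--Schwarz. The only deviation is the denominator bookkeeping, and there you are in fact more careful than the paper: the paper bounds $\sum_{v\in V}|\Nv[t]{v}|$ by $2|\Et{t}|$, overlooking the additive $n$ coming from the self-inclusion $v\in\Nv[t]{v}$, whereas your restriction to the non-isolated agents $V'$ with $\sum_{v\in V'}|\Nv[t]{v}|\le 4|\Et{t}|$ is airtight (isolated agents neither move nor contribute to the bound of \cref{lem:min_movement}). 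The price is the constant: as written you obtain $\lambda_t^2/(n|\Et{t}|)$, and your suggested refinement of carrying the ``$+1$'' through Cauchy--Schwarz (e.g.\ with weights $|\Nv[t]{v}|^2/(|\Nv[t]{v}|+1)\le|\Nv[t]{v}|-\tfrac{2}{3}$ on $V'$) yields $\tfrac{3}{2}\,\lambda_t^2/(n|\Et{t}|)$, not the stated $2\lambda_t^2/(n|\Et{t}|)$. This is not a defect of your argument: for two agents joined by a single edge of length $\lambda\le\eps$ the expected drop is exactly $\tfrac{3}{4}\lambda^2$, which matches the constant $\tfrac{3}{2}$ but is strictly below $2\lambda^2/(n|\Et{t}|)=\lambda^2$, so the constant $2$ in the statement is an artifact of the paper's $2|\Et{t}|$ slip rather than something you should expect to recover. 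Since the lemma is only used asymptotically (in \cref{thm:upper-bound} and the subsequent theorems), the constant-factor discrepancy is immaterial, and your proof delivers everything that is actually needed downstream.
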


\begin{proof}
From \cref{lem:etesami_modified} we know that the potential never increases: if we choose agent $v$ to be updated, 
the potential decreases by at least 
\[\Phi(S_t) - \Phi(S_{t+1}) \geq (\abs{\Nv[t]{v}}+1) \cdot \|\moveV[t]{v}\|_2^2.\] 

Let $e_t$ be a longest edge in the corresponding influence network of $S_t$.
By \cref{lem:min_movement}, we know that 
\[\sum_{v \in V}\abs{\Nv[t]{v}}\cdot \|\moveV[t]{v}\|_2 \geq 2\|e_t\|_2.\]

Using Cauchy-Schwarz $\left(\sum_{v \in V} a_v b_v \right)^2 \leq \sum_{v \in V} a_v^2 \cdot \sum_{v \in V} b_v^2$ with $a_v = \sqrt{\abs{\Nv[t]{v}}}\cdot\|\moveV[t]{v}\|_2$ and $b_v = \sqrt{\abs{\Nv[t]{v}}}$, we conclude that the expected potential drop in each step with an edge with length at least $\lambda_t$ is at least 
\begin{align*}
&\E[\Phi(S_t) - \Phi(S_{t+1})] \\
&= \sum_{v \in V}\frac{1}{n}\E[\Phi(S_t) - \Phi(S_{t+1}) \mid v \text{ is updated}]\\
&\geq \frac{1}{n}\sum_{v \in V} (\abs{\Nv[t]{v}} + 1)\|\moveV[t]{v}\|_2^2\\
&\geq \frac{1}{n}\sum_{v \in V} (\sqrt{\abs{\Nv[t]{v}}}\cdot \|\moveV[t]{v}\|_2)^2\\
&\geq \frac{1}{n}\frac{\left(\sum_{v \in V} \abs{\Nv[t]{v}}\cdot\|\moveV[t]{v}\|_2\right)^2}{\sum_{v \in V} \sqrt{\abs{\Nv[t]{v}}}^2}\\
& \geq \frac{1}{n} \cdot \frac{4 (\lambda_t)^2}{2\abs{\Et{t}}}.
\end{align*}
\end{proof}

\noindent The proof of \cref{thm:upper-bound} is a direct consequence of \cref{lma:ExpPotentialDrop}.

\UpperBoundArbitraryGraph*

\begin{proof}
Note that by definition of the potential function, we have $\Phi(S) \leq \eps^2\abs{E}$ for all states $S$.
We know by \cref{lma:ExpPotentialDrop} that the expected potential drop at any step $t$ is at least 
\[
\E[\Phi(S_t) - \Phi(S_{t+1})] \geq \frac{2 \delta^2}{n\abs{\Et{t}}} \geq \frac{2 \delta^2}{n\abs{E}}
\]
as long as there is an edge with length at least $\delta$. 

Applying the classic additive drift theorem (see, e.g.,~\cite[Theorem 2.3.1]{Lengler20} and the historic references therein) we directly observe that the expected number of steps to reach a $\delta$-stable state is upper bounded by
\[
\frac{\Phi(S)}{\frac{2 \delta^2}{n \abs{E}}} \leq \frac{\abs{E} \eps ^2}{\frac{2 \delta^2}{n \abs{E}}} = \frac{n \abs{E}^2}{2} \left(\frac{\eps}{\delta}\right)^2\;, 
\]
resulting in the bound from the theorem.
\end{proof}
\noindent Our results in \cref{thm:upper-bound} directly improve the results from Etesami and Ba\c{s}ar~\cite{EtesamiB15} even though they use a slightly different convergence criterium.
In their paper, convergence is reached if the diameter of each connected component is bounded by $\delta$, and they call this state a $\delta$-equilibrium.
They bound the expected number of update steps to reach a $\delta$-equilibrium in the complete social network by $\Oh(n^9(\eps/ \delta)^2)$.

Our result transfers to their notion of convergence as follows. Assume $\delta \leq \eps/2$ and that the length of the longest edge is at most $\delta.$
If the social network is the complete graph, each connected component in the influence network must be a complete sub-graph, see \cref{clm:conectedComponent}. 
Hence, the diameter of this connected component is also bounded by $\delta$.
Hence, if $\delta \leq \eps/2$, a $\delta$-stable state must be in $\delta$-equilibrium as well.
On the other hand, if $\delta > \eps/2$, the expected number of steps to reach a $\eps/2$-stable state and hence a $\delta$-equilibrium is bounded by $\Oh(n^5)$ by \cref{thm:upper-bound}.

\bigskip 

The next theorem shows that our bound on the potential drop per step is tight. Consequently, if we would like to improve the theorem, we have to choose a different potential function and/or consider multiple activations at once.
\begin{theorem}
\label{rem:RemakTightniss}
There is a family of instances and initial states with $\abs{E} = \Theta(n^2)$ and a potential of $\Theta(n^2\eps^2)$, where the expected potential drop is $\Theta(\eps^2/n^3)$ for the first activation. 
\end{theorem}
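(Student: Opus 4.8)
\emph{Plan.} The idea is to make the potential as large as it can be, $\Theta(|E|\eps^2)$, while only a vanishing fraction of it sits on edges that the dynamics can ever touch. Concretely, I would work in one dimension on the complete social network $G=K_n$, so that $|E|=\binom{n}{2}=\Theta(n^2)$. Pick two distinguished agents $u,w$ and set $\posV{u}{0}=0$, $\posV{w}{0}=\eps/n$; place the remaining agents $v_1,\dots,v_{n-2}$ pairwise far apart and far from $u,w$, for instance $\posV{v_i}{0}=3i\eps$. Every pair of agents other than $\{u,w\}$ is then at distance at least $2\eps$, whereas $\{u,w\}$ has distance $\eps/n\le\eps$; hence $\Et{0}=\{\{u,w\}\}$, i.e.\ the influence network is a single edge of length $\eps/n$.

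From this I would read off the three claimed quantities. First, $|E|=\binom{n}{2}=\Theta(n^2)$. Second, every social edge except $\{u,w\}$ has length $>\eps$ and therefore contributes exactly $\eps^2$ to the potential, so
\[
\Phi(S_0)=\Bigl(\binom{n}{2}-1\Bigr)\eps^2+\frac{\eps^2}{n^2}=\Theta(n^2\eps^2).
\]
Third, I would compute the expected one-step drop by conditioning on the activated agent. If some $v_i$ is activated, then $\Nv[0]{v_i}=\{v_i\}$ (all its neighbours are at distance $>\eps$), so it does not move and $\Phi$ is unchanged. If $u$ is activated, then $\Nv[0]{u}=\{u,w\}$, so $u$ moves to the midpoint of $\posV{u}{0}$ and $\posV{w}{0}$, i.e.\ $\moveV[0]{u}=(\posV{w}{0}-\posV{u}{0})/2=\eps/(2n)$; afterwards $u$ is still within $\eps$ of $w$ and, having moved by only $\eps/(2n)<\eps/2$, still at distance $>\eps$ from every $v_i$, so the influence network is unchanged and \cref{lem:potentialDrop} applies \emph{with equality}:
\[
\Phi(S_0)-\Phi(S_1)=\bigl(\abs{\Nv[0]{u}}+1\bigr)\len{\moveV[0]{u}}^2=3\cdot\frac{\eps^2}{4n^2},
\]
and the same value occurs when $w$ is activated. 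Averaging over the uniformly random activation gives
\[
\E[\Phi(S_0)-\Phi(S_1)]=\frac{1}{n}\cdot 2\cdot\frac{3\eps^2}{4n^2}=\frac{3\eps^2}{2n^3}=\Theta\!\left(\frac{\eps^2}{n^3}\right).
\]

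There is no serious obstacle in this argument: the only point that needs care is verifying that the first move does not change the influence network, which is immediate from the generous $2\eps$-separation of the ``far'' agents, and which is exactly what lets \cref{lem:potentialDrop} deliver an exact value rather than a lower bound. The conceptual payoff is the statement in the paragraph preceding the theorem: here $\Phi(S_0)=\Theta(|E|\eps^2)$ is of the largest possible order while the expected per-step decrease is only $\Theta(\eps^2/n^3)$, so bounding the number of steps by the ratio of the potential to its expected per-step drop cannot beat $\Theta(n^5)=\Theta(n|E|^2)$, matching the order of \cref{thm:upper-bound}; consequently any improvement must come from a different potential function, or from analysing several activations at once rather than a single one.
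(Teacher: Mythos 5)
Your construction and computations are correct, and they do verify the three literal requirements of the statement: $|E|=\binom{n}{2}=\Theta(n^2)$, $\Phi(S_0)=\bigl(\binom{n}{2}-1\bigr)\eps^2+\eps^2/n^2=\Theta(n^2\eps^2)$, and an expected first-step drop of $\frac{1}{n}\cdot 2\cdot\frac{3\eps^2}{4n^2}=\Theta(\eps^2/n^3)$, where your check that the influence network is unchanged after the first move correctly licenses the equality case of \cref{lem:potentialDrop}. The route is genuinely different from the paper's: the paper builds a one-dimensional dumbbell (two cliques joined through $\ell$, $r$ and a path of roughly $2n$ nodes whose edge lengths are just below $\eps$, with positions tuned so that \emph{every} agent moves by exactly $\hat m=\Theta(\eps/n^2)$ when activated), so its small drop occurs at a state whose influence network has $\Theta(n^2)$ edges and a longest edge of length exactly $\eps$.

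That difference is where your proposal falls short of what the theorem is there for. The paper uses this statement to argue that the per-step drop bound of \cref{lma:ExpPotentialDrop} is tight and hence that the upper-bound technique of \cref{thm:upper-bound} cannot be improved without a new potential or multi-step arguments. For that message the small drop must occur at a state that is genuinely far from $\delta$-stable. In your instance the only influence edge has length $\eps/n$: the state is already $\delta$-stable whenever $\delta\geq\eps/n$, and for smaller $\delta$ the instance converges in $O(n\log(\eps/(n\delta)))$ expected activations, while the $\Theta(n^2\eps^2)$ potential sits entirely on capped edges the dynamics never touches. Consequently your closing inference --- that the potential-to-drop ratio method ``cannot beat $\Theta(n^5)$'' --- does not follow from your example: a per-step analysis is only constrained by drops at non-$\delta$-stable states, and along your instance's trajectory those drops are $\Theta(\lambda^2/n)$ for current edge length $\lambda$, so phase-wise bookkeeping certifies far fewer than $n^5$ steps for it. The paper's dumbbell avoids exactly this objection: the drop $\Theta(\eps^2/n^3)$ is attained while the longest influence edge equals $\eps$ and all agents move uniformly, matching the bound of \cref{lma:ExpPotentialDrop} in the regime the convergence argument actually uses (and, per the simulations, that instance empirically needs $\Theta(n^4)$ steps). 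Incidentally, your instance even undercuts the stated constant of \cref{lma:ExpPotentialDrop} (expected drop $\tfrac{3}{2}\eps^2/n^3$ versus the claimed $2\lambda_0^2/(n|\Et{0}|)=2\eps^2/n^3$, since $\sum_{v}|\Nv[0]{v}|=2|\Et{0}|+n$ rather than $2|\Et{0}|$), which is another indication of how degenerate the regime $|\Et{0}|=1$ is. So: accepted as a proof of the statement as literally worded, but to support the tightness claim the theorem is meant to make, you need the small expected drop at a state containing an influence edge of length $\Theta(\eps)$, as in the paper's construction.
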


\begin{figure}[t]
    \centering
\resizebox{\linewidth}{!}{    
\begin{tikzpicture}
\pgfmathsetmacro{\w}{1.6}
\pgfmathsetmacro{\h}{0.6}
\pgfmathsetmacro{\x}{\w/35}
\node[circle,fill=black,inner sep=0pt,minimum size=3pt]  (la) at (0*\w,0*\h) {};
\node[circle,fill=black,inner sep=0pt,minimum size=3pt]  (lb) at (0*\w,1*\h) {};
\node[circle,fill=black,inner sep=0pt,minimum size=3pt]  (lc) at (0*\w,2*\h) {};
\node[circle,fill=black,inner sep=0pt,minimum size=3pt]  (ld) at (0*\w,3*\h) {};

\node at (0*\w,-0.5*\h) {$C_l$};

\draw[thick] (la) -- (lb);
\draw[thick] (lb) -- (lc);
\draw[thick] (lc) -- (ld);

\draw[thick] (la) to[out=120,in=-120] (lc);
\draw[thick] (lb) to[out=120,in=-120] (ld);
\draw[thick] (la) to[out=120,in=-120] (ld);

\node[circle,fill=black,inner sep=0pt,minimum size=3pt]  (l) at (\x*5,1.5*\h) {};
\node[label={[xshift=0.1*\w cm, yshift=-1*\h cm]$\ell$}] at (l) { };

\draw[thick] (la) -- (l);
\draw[thick] (lb) -- (l);
\draw[thick] (lc) -- (l);
\draw[thick] (ld) -- (l);

\node[circle,fill=black,inner sep=0pt,minimum size=3pt]  (pa) at (\w-4*\x,1.5*\h) {};
\node[circle,fill=black,inner sep=0pt,minimum size=3pt]  (pb) at (2*\w-10*\x,1.5*\h) {};
\node[circle,fill=black,inner sep=0pt,minimum size=3pt]  (pc) at (3*\w-13*\x,1.5*\h) {};
\node[circle,fill=black,inner sep=0pt,minimum size=3pt]  (pd) at (4*\w-13*\x,1.5*\h) {};
\node[circle,fill=black,inner sep=0pt,minimum size=3pt]  (pe) at (5*\w-16*\x,1.5*\h) {};
\node[circle,fill=black,inner sep=0pt,minimum size=3pt]  (pf) at (6*\w-22*\x,1.5*\h) {};

\node[circle,fill=black,inner sep=0pt,minimum size=3pt]  (r) at (7*\w-31*\x,1.5*\h) {};
\node[label={[xshift=-0.1*\w cm, yshift=-1*\h cm]$r$}] at (r) { };

\draw[thick] (l) -- node[midway,above]{$\eps - 9\hat{m}$} (pa);
\draw[thick] (pa) -- node[midway,above]{$\eps - 6\hat{m}$} (pb);
\draw[thick] (pb) -- node[midway,above]{$\eps - 3\hat{m}$} (pc);
\draw[thick] (pc) -- node[midway,above]{$\eps$} (pd);
\draw[thick] (pd) -- node[midway,above]{$\eps - 3\hat{m}$} (pe);
\draw[thick] (pe) -- node[midway,above]{$\eps - 6\hat{m}$} (pf);
\draw[thick] (pf) -- node[midway,above]{$\eps - 9\hat{m}$} (r);

\node[circle,fill=black,inner sep=0pt,minimum size=3pt]  (ra) at (7*\w-26*\x,0*\h) {};
\node[circle,fill=black,inner sep=0pt,minimum size=3pt]  (rb) at (7*\w-26*\x,1*\h) {};
\node[circle,fill=black,inner sep=0pt,minimum size=3pt]  (rc) at (7*\w-26*\x,2*\h) {};
\node[circle,fill=black,inner sep=0pt,minimum size=3pt]  (rd) at (7*\w-26*\x,3*\h) {};

\node at (7*\w-26*\x,-0.5*\h) {$C_r$};

\draw[thick] (ra) -- (rb);
\draw[thick] (rb) -- (rc);
\draw[thick] (rc) -- (rd);

\draw[thick] (ra) to[out=60,in=-60] (rc);
\draw[thick] (rb) to[out=60,in=-60] (rd);
\draw[thick] (ra) to[out=60,in=-60] (rd);

\draw[thick] (ra) -- (r);
\draw[thick] (rb) -- (r);
\draw[thick] (rc) -- (r);
\draw[thick] (rd) -- (r);

\end{tikzpicture}
}
    \caption{A state $S$ of a HKS with $\Phi(S) = \Theta(n^2 \eps)$ and an expected potential drop of $\Theta(\eps^2/n^3)$. Only edges in $\Et{0}$ are presented, and $\hat{m} = \eps/(n^2/16 +5n/4-1)$ represents the equal available movement of all nodes. Note that the state $S$ is a one-dimensional instance and the position of all nodes of the cliques $C_{\ell}$ and $C_r$ have the same position, respectively. We use the second dimension only for a better illustration of the influencing network. We call the state $S$ with its social network reduced to the edges in $\Et{0}$ a \textsl{Dumbbell} instance.} 
    \label{fig:DumbbellGraph}
\end{figure}
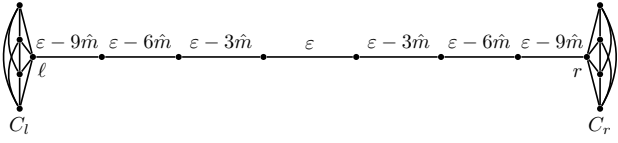

\begin{proof}
Consider the following family of $1$-dimensional HKSs $HK_n = \HKS[0]$ such that $\abs{V} = 4 n$ for any $n \in \mathbb{N}_{> 1}$, see \cref{fig:DumbbellGraph} for the example for $n = 4$.
The set of nodes $V$ is partitioned into sets $C_\ell, C_r, P, \{\ell,r\} \subseteq V$, such that $\abs{C_\ell} = \abs{C_r} = n$ and $\abs{P} = 2n -2$.
The set of edges $E$ is given such that $C_\ell$, $C_r$, and $P$ are cliques while nodes $\ell$ and $r$ are connected to all nodes.

To define the opinions of the agents that correspond to the nodes $V$ at state $S_0$, define $\hat{m} = \eps/(n^2 + 5n -1)$ and choose
\begin{itemize}
    \item $\posV{v}{0} = 0$ for each $v \in C_\ell$,
    \item $\posV{\ell}{0}= \hat{m}\cdot (n + 1)$,
    \item for each $j \in \{1,\dots, n\}$ there exists a node $v_j \in P$ with 
    \begin{align*}
    \posV[j]{v_j}{0} & = \posV[j-1]{v_{j-1}}{0} + \eps - 3 (n-j)\hat{m} 
    \end{align*}
    where we define $v_0 = \ell$.
    \item for each $j \in \{n+1,\dots, 2n-2\}$ there exists a node $v_j \in P$ with 
    \begin{align*}
    \posV[j]{v_j}{0} &= \posV[j-1]{v_{j-1}}{0}+\eps - 3(j-n)\hat{m}
    \end{align*}
    \item $\posV{r}{0} = \posV[2n-2]{v}{0} + \eps - 3 (n-1)\hat{m}$ 
    \item $\posV{v}{0} = \posV{r}{0} + (n +1)\hat{m} $ for each $v \in C_r$.
\end{itemize}
Note that all the edges inside the cliques $C_\ell \cup \{\ell\}$ and $C_r\cup \{r\}$ are in the influence network $\CG{0}$,
as well as each edge between $v_{j}$ and $v_{j+1}$ for $j \in \{0,\dots,2n-2\}$, where $v_0 = \ell$ and $v_{2n-1} = r$.
Also, 
\begin{align*}\abs{\posV{{v_i}}{0} - \posV{{v_j}}{0}} &\geq \posV{{v_2}}{0} - \posV{{v_0}}{0}\\ &= \eps - 3 (n-2)\hat{m} + \eps - 3 (n-1)\hat{m}\\ &= 2\eps - 3(2n-3)\eps/(n^2 + 5n -1)\\ &> \eps
\end{align*}for all $0 \leq i,j \leq 2n$ with $\abs{i-j} \geq 2$ and therefore the above-mentioned edges are the only ones in $\CG{0}$.

We proceed by verifying that for each $v \in V$ it holds that $\abs{\moveV[0]{v}} = \hat{m}$. We calculate the available movement for $\ell$. Let $v \in C_\ell$.
Since all $n$ agents in $C_\ell$ have the same initial position $\posV{v}{0} = 0$, it holds that
\begin{align*}
\abs{\moveV[0]{\ell}}  &= \frac{n\cdot \posV{v}{0} + \posV[1]{v_1}{0} - (n+1) \posV{\ell}{0}}{n+2} \\
        &= \frac{\posV{\ell}{0}+ \eps - 3 (n-1)\hat{m} - (n+1) \posV{\ell}{0}}{n+2} \\
        &= \frac{\eps - 3 (n-1)\hat{m} - n \cdot \posV{\ell}{0}}{n+2} \\
        &= \frac{\eps - 3 (n-1)\hat{m}- n \cdot \hat{m}\cdot (n +1)}{n+2} \\
        &= \frac{\eps - (n^2 +4n-3)\hat{m}}{n+2} \\
        &= \frac{\eps - (n^2 +4n-3) \cdot \eps/(n^2 + 5n -1)}{n+2} \\
        &= \eps \cdot \frac{(n^2 + 5n -1) - (n^2 +4n-3)}{(n^2 + 5n -1)(n+2)} \\
        &= \eps \cdot \frac{n +2}{(n^2 + 5n -1)(n+2)} \\
        &= \hat{m}\;.
\end{align*}
The calculation of the available movement of the other agents is analogous.
Let $v \in C_\ell$, then it holds that 
\begin{align*}
\abs{\moveV[0]{v}} & = \frac{\posV{\ell}{0}- \posV{v}{0}}{n+1} + \sum_{v'\in C_\ell}\frac{\posV{v'}{0}- \posV{v}{0}}{n+1}\\
 & = \frac{\posV{\ell}{0}}{n+1} = \frac{\hat{m}\cdot (n + 1)}{n+1} = \hat{m}.
\end{align*}
Let $v \in C_r$, then we have 
\begin{align*}
\abs{\moveV[0]{v}} & = \left\lvert\frac{\posV{r}{0}- \posV{v}{0}}{n+1} + \sum_{v'\in C_\ell}\frac{\posV{v'}{0}- \posV{v}{0}}{n+1}\right\rvert \\
 & = \left\lvert\frac{\posV{r}{0}- (\posV{r}{0} + (n +1)\hat{m})}{n+1}\right\rvert  \\
 & = \frac{\hat{m}\cdot (n + 1)}{n+1} = \hat{m}.
\end{align*}
Let $v_i \in P$ with $i\leq n-1$, then we get that
\begin{align*}
 \abs{\moveV[0]{v_i}} & = \frac{\abs{\posV{v_{i-1}}{0} + \posV{v_{i+1}}{0}- 2\posV{v_i}{0}}}{3}\\
 & = \frac{\abs{\posV{v_{i-1}}{0} +  \eps - 3 (n-(i+1))\hat{m}- \posV{v_i}{0}}}{3} \\
 & = \frac{\abs{\eps - 3 (n-i)\hat{m}- (\eps - 3 (n-(i+1))\hat{m})}}{3} \\
 & = \frac{3\hat{m}}{3} = \hat{m}.
\end{align*}
Let $v_n \in P$, then it holds that 
\begin{align*}
 \abs{\moveV[0]{v_n}} & = \frac{\abs{\posV{v_{n-1}}{0} + \posV{v_{n+1}}{0}- 2\posV{v_n}{0}}}{3}\\
 & = \frac{\abs{\posV{v_{n-1}}{0} +  \eps - 3 (n+1-n)\hat{m}- \posV{v_n}{0}}}{3} \\
 & = \frac{\abs{\eps - 3 (n+1-n)\hat{m}- (\eps - 3 (n-n)\hat{m})}}{3} \\
 & = \frac{3\hat{m}}{3} = \hat{m}.
\end{align*}
Let $v_i \in P$ with $i>n$, then we have 
\begin{align*}
 \abs{\moveV[0]{v_i}} & = \frac{\abs{\posV{v_{i-1}}{0} + \posV{v_{i+1}}{0}- 2\posV{v_i}{0}}}{3}\\
 & = \frac{\abs{\posV{v_{i-1}}{0} +  \eps - 3 (i+1-n)\hat{m}- \posV{v_i}{0}}}{3} \\
 & = \frac{\abs{\eps - 3 (i+1-n)\hat{m}- (\eps - 3 (i-n)\hat{m})}}{3} \\
 & = \frac{3\hat{m}}{3} = \hat{m}.
\end{align*}
Finally, we get that
\begin{align*}
\abs{\moveV[0]{r}}  &=\left\rvert \sum_{v \in C_r\cup {v_{2n-2}}} \frac{\posV{v}{0}-\posV{r}{0}}{n+2}\right\rvert\\
        &= \frac{(n+1) \posV{r}{0}}{n+2} \\
        &\quad - \frac{n\cdot (\posV{r}{0} + (n +1)\hat{m})}{n+2} \\
        &\quad -\frac{(\posV{r}{0} - \eps + 3 (n-1)\hat{m})}{n+2} \\
        &= \frac{\eps - 3 (n-1)\hat{m} - n(n +1)\hat{m}}{n+2}\\
        &= \frac{\eps - (n^2 +4n-3)\hat{m}}{n+2} \\
        &= \frac{\eps - (n^2 +4n-3) \cdot \eps/(n^2 + 5n -1)}{n+2} \\
        &= \eps \cdot \frac{(n^2 + 5n -1) - (n^2 +4n-3)}{(n^2 + 5n -1)(n+2)} \\
        &= \eps \cdot \frac{n +2}{(n^2 + 5n -1)(n+2)} \\
        &= \hat{m}\;.
\end{align*}

By \Cref{lem:etesami_modified}, the expected potential drop is given by
\begin{align*}
    &\E[\Phi(S_0) - \Phi(S_1)] \\
    & = \frac{1}{4n} \sum_{v \in V}(\abs{\Nv[0]{v}}+1) \cdot \abs{\moveV[0]{v}}^2\\
    & = \frac{1}{4n} \left(\frac{n}{2} \left(\frac{n}{4}+2\right) + 2\left(\frac{n}{4}+3\right) + \left(\frac{n}{2}-2\right)\cdot 4\right)\hat{m}^2\\
    & = \frac{1}{4}\left(n/8+7/2 -2/n\right) \hat{m}^2\\
    & = \frac{1}{4} \left(n/8+7/2 - 2/n\right)  (\eps/(n^2/16 +5n/4-1))^2\\
    & = \Theta(\eps^2/n^3).
\end{align*}
On the other hand, there exist $\frac{n}{2}\left(\frac{n}{2}-2\right)$ edges with length longer than $\eps$ and hence $\Phi(S_0) = \Theta(\eps^2n^2)$.
\end{proof}
Note that this only proves that the drop in the first step is sufficiently small.
The expected drop for the next step could increase after activating a node. 
Therefore, this theorem does not prove a lower bound for the convergence time. 
Instead, it shows that the analysis of the step-by-step drop is tight.

In the next section, we see an example of how to change the analysis to circumvent this bound.

\section{Special Network Topologies}
In this section, we will prove two improved upper bounds, each for a more restricted set of graph classes. 
The first result holds when the social network is a complete graph, while the second holds when, in each step of the HKS, the influence network is the same as the social network.

To prove the result for HKSs with a social network, we prove the following characteristics of these systems.
\begin{lemma}
\label{clm:conectedComponent}
Let $\HKS$ be any instance of a $d$-dimen\-sional HKS with complete social network $G=K_n$ and current influence network $\CG{ }$. If all edges in $\CG{ }$ have a length of at most $\eps/2$, each connected component in $\CG{ }$ is a complete graph.
\end{lemma}
\begin{proof}
Let $V' \subseteq V$ be the set of nodes of a connected component in $I$.
Assume that $v,u \in V'$, but $\{v,u\} \not \in E(I)$.
Then there exists a shortest path $P = (v,w_1,\dots,w_k=u)$ of length at least $2$ from $v$ to $u$ where each edge has a length of at most $\eps/2$. 
As a consequence, the distance between $v$ and $w_2$ can be at most $\eps$. 
Therefore, the edge between $v$ and $w_2$ has to exist in the influence network. 
Hence, $P$ is not the shortest path, contradicting the assumption. 
\end{proof}

\UpperBoundComplete*

\begin{proof}
We split this proof into two steps.
First, we count the number of possible steps where the influence network has an edge of length at least $\eps /2$. 
Secondly, we upper-bound the number of steps where the longest edge of the influence network is in $[\delta,\eps/2]$.

Assume in step $t$ there is an edge in the influence network with length at least $\eps /2$. Let $S_t$ and $S_{t-1}$ denote the states of the HKS in steps $t$ and $t-1$, respectively. 
In this case, by \cref{lma:ExpPotentialDrop}, we have
\[\E[\Phi(S_t)-\Phi(S_{t-1})] \geq \frac{\eps^2}{2n \abs{\Et{t}}}\;.\]
As a consequence, the expected number of such steps is bounded by $\abs{E}^2n = \Oh(n^5)$.

For the rest of the proof, assume that all edges in the influence network are shorter than $\eps/2$ and there exists one edge with a length at least $\delta$.
We project the HKS to one dimension along the longest edge. 
By \cref{lem:project}, we know that in the projected graph, no edge increases its length, and there exists an edge with a length at least $\delta$.

Let there be $k$ connected components $C_i = (V_i,E_i)$, $i \in \{1,\dots,k\}$, and $\lambda_i(t)$ be the length of the longest edge in the connected component $C_i$. We bound the total available movement in this component from below using \cref{lem:summedMovement}.

For each connected component $C_i$ with $\lambda_i(t)> 0$ let $e_i = \{u,w\}$ be a longest edge in this component.
We partition $V_i$ into $V_{i,\ell}$ and $V_{i,r}$ at $c = (\posV{u}{t}+\posV{w}{t})/2$ and we define the set $E_{\ell,r,i}$ as in \cref{lem:summedMovement}.
Since the connected component is the complete graph by \cref{clm:conectedComponent} each node from $V_{i,\ell}$ is connected to $w$ while each node from $V_{i,r}$ is connected to $u$. As a consequence, the set $E_{\ell,r,i}$ contains at least $(\abs{V_i}-1)$ edges of length at least $\lambda_t/2$ and one of them has length $\lambda_t$.
As a consequence, $\sum_{e \in E_{\ell,r,i}} \|x_t(e)\|_2 \geq \abs{V_i}\lambda_t/2$ and hence, by \cref{lem:summedMovement},
\begin{align*}
    \abs{V_i} \sum_{v \in V_i} \abs{\moveV[t]{v}}  &= \sum_{v \in V_{i}} \abs{\Nv[t]{v}}\abs{\moveV[t]{v}} \\
    &\geq 2 \sum_{e \in E_{\ell,r,i}} \|x_t(e)\|_2\geq  \abs{V_i}\lambda_i(t)
\end{align*}
and therefore 
\[ 
    \sum_{v \in V_i} \abs{\moveV[t]{v}} \geq \lambda_i(t)\;.
\]
As a consequence, it holds that
\begin{align*}
    &\E[\Phi(S_t)-\Phi(S_{t-1})] \\
    &\geq \frac{1}{n}\sum_{v \in V} (\abs{\Nv[t]{v}} + 1)\|\moveV[t]{v}\|_2^2\\
    &\geq \frac{1}{n}\sum_{i = 1}^k (\abs{V_i}+1)\sum_{v \in V_i}\|\moveV[t]{v}\|_2^2\\
    & \geq \frac{1}{n}\sum_{i = 1}^k (\abs{V_i}+1) \left(\sum_{v \in V_i}\|\moveV[t]{v}\|_2\right)^2/\abs{V_i}\\
    & > \frac{1}{n} \cdot \sum_{i = 1}^k (\lambda_{i}(t))^2.
\end{align*}
Since one of the edges $\lambda_{i}(t)$ has a length at least $\delta$, the expected potential drop is at least $\delta^2/n$.
Therefore, in expectation, there are at most $\Oh(\abs{E} n (\eps/\delta)^2)$ steps where the length of the longest edge is in $[\delta,\eps/2]$.
Combining the two results finishes the proof.
\end{proof}

\noindent We say an HKS is \emph{socially stable} if, independently of the update steps, the influence network is always equal to the social network.
For these systems, we can prove a better upper bound on the expected number of steps needed to reach a $\delta$-stable state.
Examples of such graphs are the path, where all the nodes are positioned with equal distance of at most $\eps$, as well as the graph from \cref{rem:RemakTightniss} if the social network for the latter is reduced to the set of edges in $\Et{0}$.

\begin{theorem}
Let $\HKS$ be a HKS where the social network and the influence network are equal in each step.
Using uniform asynchronous update steps, the expected convergence time to a $\delta$-stable state is bounded by $\Oh(n \abs{E}^2 \log(\eps/\delta))$.
\end{theorem}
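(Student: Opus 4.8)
The hypothesis that the influence network coincides with the social network at all times is exactly what converts the \emph{additive} potential‑drop estimate behind \cref{thm:upper-bound} into a \emph{multiplicative} one, and this is the source of the $\log(\eps/\delta)$ factor in place of $(\eps/\delta)^2$. Since $\Et{t}=E$ for every $t$, every edge always has length at most $\eps$, the truncation in $\Phi$ is never active, and $\Phi(S_t)=\sum_{\{u,v\}\in E}\|\posV{u}{t}-\posV{v}{t}\|_2^2$ at all times. In particular, writing $\lambda_t$ for the length of a longest edge at time $t$, one has
\[\Phi(S_t) \le |E|\,\lambda_t^2 .\]
(This is precisely the inequality that fails in the general setting, where a constant fraction of $\Phi$ may sit on edges that are not in the influence network; hence the need for the social‑stability assumption.) Combining this with \cref{lma:ExpPotentialDrop} and $|\Et{t}|=|E|$ gives, at every step $t$,
\[\E\left[\Phi(S_t)-\Phi(S_{t+1}) \mid S_t\right] \ge \frac{2\lambda_t^2}{n|E|} \ge \frac{2}{n|E|^2}\,\Phi(S_t).\]

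To extract the $\log(\eps/\delta)$ bound cleanly I would argue scale by scale. For $k=0,1,2,\dots$ let $\tau_k$ be the first time at which every edge has length at most $\eps/2^{k}$; by social stability $\tau_0=0$, and $\lambda_{\tau_k}\le\eps/2^k$, so $\Phi(S_{\tau_k})\le|E|\eps^2/4^k$. For $\tau_k\le t<\tau_{k+1}$ there is an edge longer than $\eps/2^{k+1}$, i.e. $\lambda_t>\eps/2^{k+1}$, so by the displayed inequality the expected one‑step drop of $\Phi$ is at least $c_k:=\eps^2/(2\cdot4^{k}\,n|E|)$. Since $\Phi\ge0$ and is non‑increasing (\cref{lem:etesami_modified}), a standard optional‑stopping / Wald‑type argument yields
\[\E\left[\tau_{k+1}-\tau_k \mid \mathcal{F}_{\tau_k}\right] \le \frac{\Phi(S_{\tau_k})}{c_k} \le \frac{|E|\eps^2/4^{k}}{\eps^2/(2\cdot 4^{k} n|E|)} = 2n|E|^2 ,\]
a bound independent of $k$. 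Summing over $k=0,\dots,K-1$ with $K:=\lceil\log_2(\eps/\delta)\rceil$ and using the law of total expectation gives $\E[\tau_K]\le 2n|E|^2 K$. Finally $\tau_K$ is the first time every edge has length at most $\eps/2^K\le\delta$, i.e. the first $\delta$‑stable state, so the convergence time is at most $\E[\tau_K]=\Oh(n|E|^2\log(\eps/\delta))$.

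The only genuine obstacle is the first paragraph: seeing that social stability forces $\Phi(S_t)\le|E|\lambda_t^2$, and hence the geometric drop; once that is in hand, the scale‑by‑scale bookkeeping is routine. A quicker but slightly lossier route is to iterate the multiplicative inequality directly to $\E[\Phi(S_t)]\le e^{-2t/(n|E|^2)}|E|\eps^2$ and then apply Markov's inequality together with monotonicity of $\Phi$ to the hitting time; this gives $\Oh\!\big(n|E|^2\log(|E|\eps/\delta)\big)$, which already implies the statement whenever $\delta$ is, say, polynomially small in $n$. One should also record the trivial edge cases: if $\delta\ge\eps$ then $S_0$ is already $\delta$‑stable, and disconnected social networks need no separate treatment since all the estimates above are global.
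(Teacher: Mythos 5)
Your proposal is correct and follows essentially the same route as the paper's proof: under social stability $\Phi(S_t)\le|E|\lambda_t^2$, the per-step expected drop from \cref{lma:ExpPotentialDrop} then yields a cost of $\Oh(n|E|^2)$ steps per dyadic scale of the longest edge length, and summing over the $\Oh(\log(\eps/\delta))$ scales gives the bound. Your stopping-time formalization via $\tau_k$ and optional stopping is merely a more explicit bookkeeping of the same dyadic argument the paper sketches.
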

\begin{proof}
Note that at any step, it holds that $\Phi(S_t) \leq \abs{E}(\lambda_t)^2$, where $\lambda_t$ is the length of the longest edge at time $t$.
By \cref{lma:ExpPotentialDrop}, the expected drop of the potential in each step is bounded by ${2(\lambda_t)^2}/({n\abs{E}})$.
As a consequence, for each $i \in \mathbb{N}$ the expected number of steps with $\lambda_t \in [\eps/2^{i+1},\eps/2^{i}]$ is bounded by $\Oh(n\abs{E}^2)$.
Since for $\lambda_t \in [\delta,\eps]$ there are at most $\log(\eps/\delta)$ such intervals, the expected number of update steps is bounded by $\Oh(n \abs{E}^2 \log(\eps/\delta))$.
\end{proof}

\section{Lower Bound}
In this section, we complement our upper bounds on the expected convergence time with a lower bound. 
To the best of our knowledge, no lower bound for asynchronous updates is known so far.
We prove that there exists a family of instances of HKS $(S_{0,4n})_{n \in \mathbb{N}}$ for which at least $\Omega(\Phi(S_{0,4n}) n \abs{E})$ updates are needed to converge. 
In this family, we have $\abs{E} = \Theta(n^2)$ and $\Phi(S_0) = \Theta(n)$ for $\eps=1$ and $\delta < 1/2$.
Note that the lower bound holds for any given update sequence. 
In particular, we also prove that there cannot be a deterministic algorithm that reaches a $\delta$-stable state faster than the proven lower bound.

\thmLowerBound*

Note that this lower bound is tight with regard to the considered family of instances and the parameter $n$ since \cref{thm:upper-bound} states that in expectation at most $\Oh((\Phi(S_{0,4n}) n \abs{E})/\delta^2) = \Oh(n^4)$ updates are needed, to reach a $\delta$-stable state.

We prove this lower bound in three steps. First, we prove that edges cannot be deactivated in the defined family of instances. 
Then, we upper bound the total available movement of certain nodes in the family by a modified process that is simpler to analyze.

\subparagraph*{Dumbbell Graph}
We define the following family of instances of HKS $(S_{0,4n})_{n \in \mathbb{N}} = (G_{4n} =(V,E),\eps,x_0))_{n \in \mathbb{N}}$: Let $G_{4n}= (V,E)$ be the Dumbbell graph defined as follows. The $\abs{V} = 4n$ nodes of the graph are partitioned into the sets $C_{\ell },C_{r}, P,\{\ell,r\} \subseteq V$, such that $\abs{C_{\ell}} = \abs{C_{r}} =n $, $\abs{P} = 2n-2$.

The set of edges are given such that $C_{\ell}\cup \{\ell\}$ and $C_{r}\cup \{r\}$ are cliques, $P = \{p_1,\dots,p_{2n-2}\}$ is a path such that $\{p_i,p_{i+1}\} \in E$ for each $i \in \{1,\dots,2n-2\}$, as well as $\{\ell,p_1\},\{r,p_{2n-2}\} \in E$.
In state $S_{0,4n}$ the opinions of the agents that correspond to the nodes $V$ are given as follows:
\begin{itemize}
    \item $x_0(v) = 0$ for each $v \in C_{\ell}$
    \item $x_0(\ell) = \eps/n$
    \item $x_0(p_i) = i \eps + \eps/n$ for each node $p_i \in P$
    \item $x_0(r) = (2n-1)\eps + \eps/n$
    \item $x_0(v) = (2n-1)\eps + 2\eps/n$ for each $v \in C_{r}$.
\end{itemize}
Note that $\Phi(S_{0,4n}) \in \Omega(n\eps^2)$ and $\abs{E} = n^2$, therefore $\Omega(\Phi(S_0)n \abs{E}/\delta^2) = \Omega(n^4)$.

\begin{lemma}
\label{lma:deactivatingEdges}
When starting the HKS process with $S_{0,4n}$, it can never happen that an edge is deactivated during the process.
\end{lemma}
\begin{proof}
Assume we activate node $p_i \in P$ in step $t+1$ and that the edges $\{p_{i-1},p_i\}$ and $\{p_{i},p_{i+1}\}$ are active as well as $x_{t}(p_{i-1})\leq x_{t}(p_{i})\leq x_{t}(p_{i+1})$. Which means that $x_t(p_{i}) - x_t(p_{i-1}) \leq \eps$ and $x_t(p_{i+1}) - x_t(p_i) \leq \eps$ and hence $\abs{x_t(p_{i+1}) - x_t(p_{i-1})} \leq 2\eps$.
As a consequence,
\begin{align*}
    x&_{t+1}(p_{i}) - x_{t+1}(p_{i-1})\\
    &= \frac{1}{3}(x_{t}(p_{i})+x_{t}(p_{i+1})+x_{t}(p_{i-1})) - x_t(p_{i-1})\\
    &= \frac{1}{3}((x_t(p_{i}) - x_{t}(p_{i-1})) + (x_t(p_{i+1}) - x_{t}(p_{i-1})))\\
    & \leq \frac{1}{3}(\eps +2\eps) = \eps
\end{align*}
and similarly $x_t(p_{i+1}) - x_{t+1}(p_i) \leq \eps$. Furthermore, we obviously have that $x_{t}(p_{i-1}) \leq \frac{1}{3}(x_{t}(p_{i})+x_{t}(p_{i+1})+x_{t}(p_{i-1})) \leq x_{t}(p_{i+1})$ and hence  $x_{t+1}(p_{i-1}) \leq x_{t+1}(p_{i}) \leq x_{t+1}(p_{i+1})$ 

For each pair of nodes $v,v' \in C_{\ell}$ and any time $t$, we will prove that $\abs{x_t(v')- x_{t}(v)} \leq \eps/n$, that $\abs{x_{t}(\ell) - x_{t}(v)} \leq 2\eps/n$, and that $\abs{x_{t}(\ell) - x_{t}(p_1)} \leq \eps$. This claim is true in the start configuration since all nodes in $C_{\ell}$ have the same position, and $\ell$ has distance $\eps/n$ to the nodes in $C_{\ell}$.

Assume we activate a node $v \in C_{\ell}$ in step $t+1$ and $\abs{x_t(v')- x_{t}(v'')} \leq \eps/n$ as well as $\abs{x_{t}(\ell) - x_{t}(v')} \leq 2\eps/n$ for each pair of nodes $v',v'' \in C_{\ell} \cup \{\ell\}$. Then it holds for any node $v' \in C_{\ell} \cup \{\ell\} \setminus \{v\}$ that
\begin{align*}
    \abs{&x_{t}(v') - x_{t+1}(v)} \\
    & = \abs{x_{t}(v') - \sum_{v'' \in C_{\ell} \cup \{\ell\}}x_{t}(v'')/(\abs{C_{\ell} \cup \{\ell\}})}\\
    & \leq \sum_{v'' \in C_{\ell} \cup \{\ell\} \setminus \{v'\}} \abs{x_{t}(v') - x_{t}(v'')}/(n+1)\\
    & \leq ((n-1) (\eps/n) + 2\eps/n)/(n+1) = \eps/n\;.
\end{align*}
Furthermore, it holds that
\begin{align*}
    \abs{&x_{t}(\ell) - x_{t+1}(v)} \\
    & = \abs{x_{t}(\ell) - \sum_{v'' \in C_{\ell} \cup \{\ell\}}x_{t}(v'')/(\abs{C_{\ell} \cup \{\ell\}})}\\
    & \leq \sum_{v'' \in C_{\ell}} \abs{x_{t}(v') - x_{t}(v'')}/(n+1)\\
    & \leq (n \cdot (\eps/n))/(n+1) \leq 2\eps/n\;.
\end{align*}

Assume we activate the node $\ell$ in step $t+1$ and for each pair of nodes $v,v' \in C_{\ell}$ it holds that $\abs{x_t(v')- x_{t}(v)} \leq \eps/n$, that $\abs{x_{t}(\ell) - x_{t}(v)} \leq 2\eps/n$, and that $\abs{x_{t}(\ell) - x_{t}(p_1)} \leq \eps$. Then it holds for any node $v' \in C_{\ell}$ that
\begin{align*}
    \abs{&x_{t}(v') - x_{t+1}(\ell)} \\
    & = \abs{x_{t}(v') - \sum_{v'' \in C_{\ell} \cup \{\ell,p_1\}}x_{t}(v'')/(\abs{C_{\ell} \cup \{\ell,p_1\}})}\\
    & \leq \sum_{v'' \in C_{\ell} \cup \{\ell,p_1\}\setminus \{v'\}} \abs{x_{t}(v') - x_{t}(v'')}/(n+2)\\
    & \leq ((n-1) (\eps/n)+ 2\eps/n + \eps)/(n+2)\\
    & = (2\eps+ \eps/n)/(n+2) \leq 2\eps/n
\end{align*}
Furthermore, it holds that
\begin{align*}
    \abs{&x_{t}(p_1) - x_{t+1}(\ell)} \\
    & = \abs{x_{t}(p_1) - \sum_{v'' \in C_{\ell} \cup \{\ell,p_1\}}x_{t}(v'')/(\abs{C_{\ell} \cup \{\ell,p_1\}})}\\
    & \leq \sum_{v'' \in C_{\ell} \cup \{\ell\}} \abs{x_{t}(p_1) - x_{t}(v'')}/(n+2)\\
    & \leq (n (\eps/n+\eps)+ \eps)/(n+2) = \eps.
\end{align*}
Analogously, we can prove the statement for the nodes in $C_{r} \cup \{r\}$.

Hence, no edge can disappear when updating any node in the graph at any step $t$.
\end{proof}

A necessary condition to terminate is that each edge on the path $P$ has a length of at most $\delta$. 
Since the distance between $\ell$ and $r$ is $(2n-1)\eps$ at state $S_{0,4n}$ and can be at most $(2n-1)\delta$ in the final state, one of the nodes $\ell$ and $r$ has to move by at least $(\eps-\delta)(2n-1)/2$.

\subparagraph*{Modified Update Process}
In the following, we will lower-bound the number of required update steps to move $\ell$ by this distance by defining a modified update process that moves $\ell$ faster.
However, the required number of steps in the modified process can be analyzed more easily. 
The opinions in the modified process are denoted by $x'_t(v)$ for each $v \in V$ and $t \in \mathbb{N}$.
Let $S_t = \sum_{v \in C_{\ell} \cup \ell} x_{t}(v)/(n+1)$ and $S'_t = \sum_{v \in C_{\ell} \cup \ell} x'_{t}(v)/(n+1)$ denote the mass center of the nodes $C_{\ell} \cup \ell$ in the original and modified process respectively. 
Note that when activating an agent from $C_{\ell}$ in each of the two processes, it will move to the corresponding center of mass.

Consider any update-sequence $(u_t)_{t \in \mathbb{N}_{\geq 0}}$ specifying for any $t \in \mathbb{N}_{\geq 0}$ the node $u_t$ that will be updated to generate the positions $x_{t+1}$ from $x_t$.
Assume for simplicity of notation, that the nodes in $C_{\ell}$ are numbered from one to $n$, i.e., $C_{\ell} = \{v_1,\dots,v_n\}$.
Given $(u_t)_{t \in \mathbb{N}_{\geq 0}}$, the modified process is defined as follows: 
\begin{itemize}
    \item $x'_0 = x_0$.
    \item If $u_{t} \not \in C_{\ell}$ do nothing, i.e., set $x'_{t+1} = x'_t$.
    \item If $u_{t} \in C_{\ell}$, update multiple nodes at once. Generate $x'_{t+1}$ as follows:
    \begin{itemize}
        \item find the node $v_{\min} \in C_{\ell}$ that has the smallest value $x'_{t}(v)$ with the smallest index, and set 
        \[x'_{t+1}(v_{\min}) = S'_t\;;\]
        \item move the node $\ell$ to the right:
        set \[x'_{t+1}(\ell) = \frac{\eps}{n}+\sum_{v \in C_{\ell}}\frac{x'_{t+1}(v)}{n}\;;\]
        \item if $v_{\min} = v_n$, move all the nodes in $C_{\ell}$ to $S'_t$, i.e., for all $v \in C_{\ell}$ set \[x'_{t+1}(v) = S'_t\;.\] 
    \end{itemize}
\end{itemize}
We prove \cref{thm:lower-bound} in two steps. 
First, we prove that, indeed, the modified process moves the node $\ell$ faster than the original process. 
Afterward, we prove that the modified process needs at least $\Omega(n^4)$ steps to move the clique by the distance $(\eps - \delta)(2n-1)/2$.

\begin{lemma}
\label{lma:processDominance}
For each update sequence $(u_t)_{t \in \mathbb{N}_{>0}}$ there exists for each $t \in \mathbb{N}$ a bijection $f_t: C_{\ell} \to C_{\ell}$ such that $x_t(v) \leq x'_t(f(v))$ for each $v \in C_{\ell}$.
Furthermore in each step it holds that $x_t(\ell) \leq x'_t(\ell)$ and $S_t \leq S'_t$.
\end{lemma}
\begin{proof}
First, note that the modified update process will update the nodes $C_{\ell}$ always in the same order from $v_1$ to $v_n$ since it chooses the node with a smallest opinion and the smallest index. 
Since, in the beginning, all the nodes have the same opinion, and the opinion only increases when updated, the first $n$ updates will update the nodes in the claimed order. 
After updating the node $v_n$, all the nodes will be shifted to the same (increased) opinion. 
Inductively, the nodes are activated in the same order.

We will prove the claim via induction on the number of steps.
Before activating any node, the bijection $f_0: C_{\ell} \to C_{\ell}, v \mapsto v$ fulfills the required properties, since $x_0(v)= x'_0(v)=0$.
Furthermore it holds that $x_0(\ell) = x'_0(\ell) = \eps/n$ and \[S_0 = S'_0 = \sum_{v \in C_{\ell} \cup \ell} x_{0}(v)/(n+1) =\frac{\eps}{n(n+1)}\;.\]

Assume that after step $t$ there is a bijection $f_t: C_{\ell} \to C_{\ell}$ such that $x_t(v) \leq x'_t(f_t(v))$ for each $v \in C_{\ell}$, $S_t \leq S'_t$, and $x_t(\ell) \leq x'_t(\ell)$.
We have to consider the following cases: the updated node is in $C_{\ell}$, the updated node is $\ell$, the modified process updates the node $v_n$, and a node not in $C_{\ell} \cup \{\ell\}$ is updated.

In the case that a node in $V \setminus (C_{\ell} \cup \{\ell\})$ is updated, the values $x_t(\ell), x'_t(\ell), S_t, S'_t$ stay unchanged. 
Furthermore, in both processes, no node from $C_{\ell}$ will be updated and hence we set the update function $f_{t+1}= f_t$.

If the node $\ell$ is updated, the modified process will not move the node $\ell$.
Note that in the original process, the edge connecting the node $\ell$ with the node $p_1$ can have length at most $\eps$, since no edge is deactivated when updating any node and hence $x_t(p_1) \leq x_t(\ell) + \eps$.
As a consequence, it holds that 
\begin{align*}
    x_{t+1}(\ell) &= \sum_{v \in C_{\ell} \cup \{\ell,p_1\}} x_t(v)/(n+2)\\
    &\leq ((n+1) S_t + (x_t(\ell) + \eps))/(n+2)\\
    &\leq ((n+1) S'_t + (x'_t(\ell) + \eps))/(n+2)\\
    &= (\sum_{v \in C_{\ell}} x'_{t}(v) + 2x'_t(\ell) +\eps)/(n+2)\\
    &= \frac{(n+2)\sum_{v \in C_{\ell}} x'_{t}(v)/n  + (n+2)\eps/n}{(n+2)}\\
    & = \sum_{v \in C_{\ell}}x'_{t}(v)/n + \eps/n = x'_t(\ell)
\end{align*}
and hence $x_{t+1}(\ell) \leq x'_{t+1}(\ell)$.
Since in both processes no node form $C_{\ell}$ was updated, for the bijection $f_t: C_{\ell} \to C_{\ell}$ with $x_{t}(v) \leq x'_{t}(f(v))$ for each $v \in C_{\ell}$, it still hods that $x_{t+1}(v) \leq x'_{t+1}(f(v))$ and hence we can set $f_{t+1} = f_t$. 
As a consequence $S_{t+1} \leq S'_{t+1}$ has to hold as well.

Let us assume that we update the clique node $v_{t+1} \in C_{\ell}$ in the next step in the original process and the node $v'_{t+1} \in C_{\ell}$ in the modified process.
Note that $x_{t+1}(v_{t+1}) = S_t$ and $x'_{t+1}(v'_{t+1}) = S'_t$ by definition of the update process.
Furthermore 
\[
    S_{t+1} = S_t +(x_{t+1}(v_t) - x_{t}(v_t))/(n+1).
\]

If $f_{t}(v_{t+1}) = v'_{t+1}$, we know that $x_{t+1}(v_{t+1}) = S_t \leq S'_t = x'_{t+1}(v'_{t+1}) = x'_{t+1}(f_{t}(v_{t+1}))$.
Since no other node form $C_{\ell}$ is moved, we define $f_{t+1} = f_{t}$ and it holds that $x_{t+1}(v)\leq x'_{t+1}(f(v))$ for each $v \in C_{\ell}$. Furthermore,
\begin{align*}
    S_{t+1} &= \sum_{v \in C_{\ell} \cup \{\ell\}} x_{t+1}(v)/(n+1)\\
    & \leq x'_{t+1}(\ell)/(n+1) + \sum_{v \in C_{\ell}} x'_{t+1}(f_{t+1}(v))/(n+1)\\
    & = \sum_{v \in C_{\ell} \cup \{\ell\}} x'_{t+1}(v)/(n+1) \\
    & = S'_{t+1}.
\end{align*}

If $f_{t}(v_{t+1}) \not = v'_{t+1}$, consider the nodes $f_{t}(v_{t+1})$ and $f^{-1}_{t}(v'_{t+1})$. We know that $x_{t}(v_{t+1}) \leq x'_{t}(f_t(v_{t+1}))$ and $x_{t}(f^{-1}_{t}(v'_{t+1})) \leq x'_{v'_{t+1}}$. Furthermore, by the choice of $v'_{t+1}$, we know that $ x'_{v'_{t+1}}\leq x'_{t}(f_t(v_{t+1}))$. As a consequence, it holds that $x_t(f^{-1}_{t}(v'_{t+1})) \leq x'_t(f_t(v_{t+1}))$. When updating the nodes, we get $x_{t+1}(v_{t+1}) = S_t \leq S'_t = x'_{t+1}(v'_{t+1})$. We define
\begin{align*}
    f_{t+1}(v) = 
    \begin{cases}
    v'_{t+1} & \text{ if } v = v_{t+1}\\
    f_{t}(v_{t+1}) & \text{ if } v = f^{-1}_{t}(v'_{t+1})\\
    f_{t}(v) & \text{ otherwise}
    \end{cases}
\end{align*}
Note that for each $v \in C_{\ell}$ it holds that $x_{t+1}(v) \leq x'_{t+1}(f_{t+1}(v))$. Furthermore, we have 
\[
    x_{t+1}(\ell) = x_{t}(\ell) \leq x'_{t}(\ell) \leq x'_{t+1}(\ell)
\]
and 
\begin{align*}
S_{t+1} &= \sum_{v \in C_{\ell} \cup \{\ell\}} x_{t+1}(v)/(n+1) \\
& \leq (x'_{t+1}(\ell) + \sum_{v \in C_{\ell}} x'_{t+1}(f_{t+1}(v)))/(n+1)\\ &
= S'_{t+1}.
\end{align*}

Finally, if the updated node in the modified processes was $v_n$, all the nodes in $C_{\ell}$ move to the point $S'_{t}$. 
Since $S_t \leq S'_t$, the identity function $f_{t+1} = id$ fulfills the required conditions, since $S'_t$ is monotonically increasing and hence there can be no node with $x_t(v) > S'_t$.
Similarly as above it follows that $x_{t+1}(\ell) \leq x'_{t+1}(\ell)$ and $S_{t+1} \leq S'_{t+1}$.
\end{proof}

\begin{lemma}
\label{lma:convergencModifiedProcess}
The modified process needs at least $\Omega(n^4)$ updates before $\ell$ has moved by at least $(\eps - \delta)(2n-1)/2$.
\end{lemma}
\begin{proof}
In the modified process, the nodes in $C_{\ell}$ are shifted to a common position every $n$th update step.
We define an update round as $n$ updates of nodes in $C_{\ell}$ such that, before the first update of the round, all nodes have the same position, and all nodes are shifted to the same position after the last update of the round.

First, we prove that in each update round, the node $\ell$ moves by at most $\frac{(1+e)\eps}{n(n+1)}$. Let $p_0$ be the common start position of all nodes in $C_{\ell}$. By definition of the modified process, the node $\ell$ has position $p_0 + \eps/n$.
We denote by $S'_t = \sum_{v \in C_{\ell} \cup \ell} x'_{t}/(n+1)$ the mass center of the nodes $C_{\ell} \cup \ell$ and by $\bar{S}'_t = \sum_{v \in C_{\ell}} x'_{t}/n$ 
the mass-center of the clique nodes $C_{\ell}$. 

When updating a node $u_{t+1} \in C_{\ell}$ at step $t+1$, it moves from position $p_0$ to position $S'_{t}$ since in each round, only non-updated nodes will be moved. This update increases the mass-center of the clique $\bar{S}'_{t}$ to  
\[\bar{S}'_{t+1} = \bar{S}'_{t} + (S'_{t} - p_0)/n.\]
After updating the node $u_{t+1}$, the node $\ell$ is shifted to $x'_t(\ell) = \bar{S}'_{t+1} + \eps/n$,
increasing the mass-center of the nodes $C_{\ell} \cup \ell$ from $S_{t}$ to 
\begin{align*}
S'_{t+1} &= (\bar{S}'_{t+1} \cdot n + \bar{S}'_{t+1} + \eps/n)/(n+1) \\
&= \bar{S}'_{t+1} + \frac{\eps}{n(n+1)}.
\end{align*}
As a consequence, we get that 
\begin{align*}
\bar{S}'_{t+1} &= \bar{S}'_{t} + (S'_{t} - p_0)/n\\
&= \bar{S}'_{t} + \frac{\bar{S}'_{t} + \frac{\eps}{n(n+1)} - \bar{S}'_0}{n}\\
&= \frac{n+1}{n}\bar{S}'_{t} - \frac{\bar{S}'_0}{n} + \frac{\eps}{n^2(n+1)} \\
\end{align*}
Via induction, we show that 
\[
    \bar{S}'_{t} = \bar{S}'_0 + \left(\left(\frac{n+1}{n}\right)^{t}-1\right) \cdot \frac{\eps}{n(n+1)}
\]
It holds that
\begin{align*}
    \bar{S}'_{1} & = \frac{n+1}{n}\bar{S}'_{0} - \frac{\bar{S}'_0}{n} + \frac{\eps}{n^2(n+1)}\\
    &= \bar{S}'_{0} + \frac{\eps}{n^2(n+1)}\\
    &= \bar{S}'_0 + \left(\left(\frac{n+1}{n}\right)^{1}-1\right) \cdot \frac{\eps}{n(n+1)}
\end{align*}
Additionally, it holds that
\begin{align*}
  \bar{S}'_{t+1} =& \frac{n+1}{n}\bar{S}'_{t} - \frac{\bar{S}'_0}{n} + \frac{\eps}{n^2(n+1)}\\
  =& \frac{n+1}{n}\left(\bar{S}'_0 + \left(\left(\frac{n+1}{n}\right)^{t}-1\right) \cdot \frac{\eps}{n(n+1)}\right)\\
  &- \frac{\bar{S}'_0}{n} + \frac{\eps}{n^2(n+1)}\\
  = & \bar{S}'_0 + \left(\left(\frac{n+1}{n}\right)^{t+1}- \frac{n+1}{n} +\frac{1}{n} \right)\frac{\eps}{n(n+1)}\\
  = & \bar{S}'_0 + \left(\left(\frac{n+1}{n}\right)^{t+1} -1 \right) \cdot \frac{\eps}{n(n+1)}
\end{align*}
As a consequence,
\begin{align*}
\bar{S}'_{n-1} &= \bar{S}'_0 + \left(\left(\frac{n+1}{n}\right)^{n-1} -1 \right) \cdot \frac{\eps}{n(n+1)}\\
& <  \bar{S}'_0 + \frac{(e-1) \eps}{n(n+1)}
\end{align*}
In the last update step of the round, all nodes are moved to 
\[
    S'_{n-1} = \bar{S}'_{n-1} + \frac{\eps}{n(n+1)} \leq \bar{S}'_0 + \frac{ e\eps}{n(n+1)}
\]
resulting in $\bar{S}'_n \leq \bar{S}'_0 + \frac{ e \eps}{n(n+1)}$ 
Consequently $\ell$ is located at or left of position $\bar{S}'_0 + \frac{e \eps}{n(n+1)} + \frac{\eps}{n}$ and hence has moved by at most $\frac{e\eps}{n(n+1)}$ in this round.

Note that after moving each node once, the relative positioning between the nodes in $C_{\ell} \cup \ell$ is the same as before moving the first node. Hence, it is sufficient to consider the total number of rounds of moving $n$ nodes from the clique.

The node $\ell$ has to move by $(\eps - \delta)(n-1)$, but after activating each node of the clique, it has moved by at most $\frac{e \eps}{n(n+1)}$. Hence, we need 
\[
    \frac{(\eps - \delta)(n-1)}{\frac{e \eps}{n(n+1)}} = \Omega(n^3 \frac{\eps -\delta}{\eps}) = \Omega(n^3)
\]
of these rounds, for $\delta$ small enough. Since each round needs $n$ updates, at least $\Omega(n^4)$ updates are needed until the modified process terminates.
\end{proof}

\begin{proof}[Proof of \cref{thm:lower-bound}]
Consider the family $(S_{0,4n})_{n \in \mathbb{N}}$ as defined in this section. 
By \cref{lma:deactivatingEdges}, no edge will be deactivated until the process terminates. 
As a consequence, for termination, each edge on the path can have a length of at most $\delta$.
This implies the the distance between $\ell$ and $r$ has to shrink from $(2n-1) \eps$ to $(2n-1) \delta$.
Therefore, one of the nodes $\ell$ or $r$ must move by at least $(n-1/2)(\eps -\delta)$ until the process terminates.
Let us w.l.o.g. assume that $\ell$ moves by this distance.

Consider the modified process as defined in this chapter.
By \cref{lma:processDominance} we know that $x_t(\ell) \leq x'_t(\ell)$ for any update step $t$. 
Furthermore, by \cref{lma:convergencModifiedProcess} we know, that at least $t \in \Omega(n^4)$ steps are needed before $x'_t(\ell)-x'_0(\ell)\geq (n-1/2)(\eps -\delta)$.
Consequently, the original process needs at least $\Omega(n^4)$ update steps until it terminates. 
\end{proof}

\section{Simulation Results}\label{sec:simulations}
To corroborate our theoretical findings, we performed agent-based simulations of asynchronous Hegselmann-Krause opinion dynamics in one dimension on two types of initial HKS states called \textsl{Path} and \textsl{Dumbbell}. They are defined as follows:
\begin{itemize}
    \item \textsl{Path:} The given social network is a path graph. Initially, the agents' opinions are uniformly distributed in one dimension with an equal distance of $\eps$ so that the influence network forms a path graph with a uniform edge length of $\eps$.
    \item \textsl{Dumbbell:} This is the state constructed in the proof of \cref{rem:RemakTightniss} using the dumbbell graph, except that the social network contains only the edges that are in $\Et{0}$
\end{itemize}
We fixed $\eps = 100$ and $\delta = 1$ in our simulations.
For each initial HKS state on social networks with varying numbers of agents $n$, we simulated 100 independent runs of random activations needed to reach a $\delta$-stable state. The code for our simulator software and all necessary tools to reproduce our figures are available from our public GitHub repository.\footnote{\url{https://github.com/dcmx/HKsim}}

We present our simulation results in \cref{fig:SimulationPathnadDumbbell}.
\begin{figure}[t]
    \centering
    \resizebox{0.45\textwidth}{!}{\input{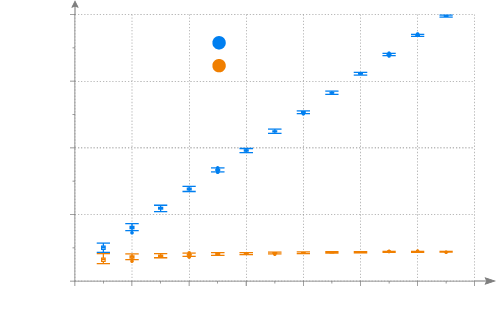}}
    \vspace{0.3cm}

    \caption{%
The plot shows the normalized convergence time: the number of agent activations until a $\delta$-stable state has been reached, divided by $n^3$. The data indicate that the convergence time on \textsl{Path} instances with equal distances scales as $n^3$ and on \textsl{Dumbbell} instances it scales as $n^4$.
    }
    \label{fig:SimulationPathnadDumbbell}
\end{figure}
There, the obtained number of activations divided by $n^3$ is plotted via box plots that summarize the results for each configuration.
Since for \textsl{Path} instances, the number of activations appears to be constant, we observe that we need $\Theta(n^3)$ activations for \textsl{Path} instances.
On the other hand, the number of activations seems to grow linearly in $n$ for \textsl{Dumbbell} instances. 
This matches our proofs (upper and lower bound) that in $\Theta(n^4)$ activations \textsl{Dumbbell} instances reach a $\delta$-stable for constant $\eps$ and $\delta$.

Note that by construction, in the first step, the potential function of both instance types is bounded by $\Phi(S_0) = \Theta(n \eps^2)$. Applying \cref{thm:upper-bound} yields an upper bound of $\Phi(S_0)/(2\delta^2/(n\abs{E})) = \Oh(n \eps^2 / (2\delta^2/(n\abs{E})))$, which yields an upper bound of $\Oh(n^3 (\eps/\delta)^2)$ for \textsl{Path} instances and 
$\Oh(n^4 (\eps/\delta)^2)$ for \textsl{Dumbbell} instances. Thus, the empirically observed lower bounds on the expected number of steps until convergence match our theoretical analysis for these two graph classes with respect to the dependence on the number of agents. 

We highlight that our simulations not only confirm our theoretical results but also allow us to empirically pinpoint the constants hidden in the asymptotic analysis. For the simulations carried out on the path our simulation results indicate a constant of roughly $1.5$, giving a total running time of roughly $1.5 \cdot n^3$. For the dumbbell we get a running time of less than $0.1 \cdot n^4$.
We remark that in our theoretical analysis we did not attempt to optimize the constants. In fact, our result for the dumbbell graph (see the proof of \cref{thm:upper-bound}) gives a bound of at most $\log(100^2)/4 \cdot n^4 \approx 3.32 \cdot n^4$.
Interestingly, the total running times are sharply concentrated around their mean. In fact, the plot in \cref{fig:SimulationPathnadDumbbell} actually shows box plots, but starting with a number of agents as small as only 40 the upper and lower whiskers become almost identical with no outliers detected.

\section{Conclusion}
In this paper, we present the first analysis of the convergence time of asynchronous Hegselmann-Krause opinion dynamics on arbitrary social networks. As our main result, we derive an upper bound of $\Oh(n \abs{E}\Phi(S_0)/\delta^2 ) \leq \Oh(n \abs{E}^2 \left(\eps/\delta \right)^2)$ expected random activations until a $\delta$-stable state is reached. This bound significantly improves over the state-of-the-art upper bound for the special case with a given complete social network. 
Moreover, our simulation results on one-dimensional instances with a path graph or a dumbbell graph as the social network indicate that our theoretical upper bound is tight for these instances. 
For the dumbbell graph, this is not only underlined by our simulations but also proven by presenting a matching lower bound with respect to the number of agents.
This theoretical lower bound on the expected convergence time is the first proven non-trivial lower bound for asynchronous opinion updates.
A challenging open problem is to improve this lower bound by finding a graph class with $\Phi(S_0) \in \Omega(n^2\eps^2)$ with expected convergence time in $\Omega(n^5)$.

It might be possible to prove better bounds for specific social network topologies. 
Regarding this, it would be interesting to consider social networks that have similar features to real-world social networks. 
Moreover, another direction for future work is to consider social networks with directed and possibly weighted edges. 
This would more closely mimic the structure of real-world neighborhood influences, allowing us to study asymmetric influence settings found in online social networks like Twitter. 
Another promising extension would be to incorporate the influence of external factors like publicity campaigns.

\subsection*{Declarations}
This work was supported by DFG Research Group ADYN under grant DFG 411362735.

\balance

\bibliographystyle{ACM-Reference-Format} 
\bibliography{HK}

\end{document}